\declaretheorem[name=Theorem, numberwithin=section]{theorem}
\declaretheorem[name=Lemma, sibling=theorem]{lemma}
\declaretheorem[name=Definition, sibling=theorem]{definition}
\declaretheorem[name=Corollary, sibling=theorem]{corollary}
\newcommand{\R}{\mathbb{R}}
\newcommand{\N}{\mathbb{N}}
\newcommand{\dx}{\mathrm{d}}
\newcommand{\calm}{{\mathcal{M}}}
\newcommand{\calx}{{\mathcal{X}}}
\newcommand{\caln}{{\mathcal{N}}}
\newcommand{\cale}{{\mathcal{E}}}
\DeclareMathOperator{\erfc}{erfc}
\DeclareMathOperator{\ban}{Ban}
\DeclareMathOperator{\diag}{diag}
\DeclareMathOperator{\clip}{clip}
\DeclareMathOperator*{\bigtimes}{\times}
\title{Differentially Private Hamiltonian Monte Carlo}
\author{
  Ossi Räisä \quad
  Antti Koskela \quad
  Antti Honkela \\
  Helsinki Institute for Information technology HIIT \\
  Department of Computer Science\\
  University of Helsinki\\
  \texttt{\{ossi.raisa, antti.h.koskela, antti.honkela\}@helsinki.fi} \\
}
\begin{document}

\maketitle

\begin{abstract}
  Markov chain Monte Carlo (MCMC) algorithms have long been the main workhorses
  of Bayesian inference. Among them, Hamiltonian Monte Carlo (HMC) has recently
  become very popular due to its efficiency resulting from effective use of the gradients
  of the target distribution. In privacy-preserving machine learning,
  differential privacy (DP) has become the gold standard in ensuring that
  the privacy of data subjects is not violated. Existing DP MCMC algorithms
  either use random-walk proposals, or do not use the Metropolis--Hastings (MH) acceptance
  test to ensure convergence without decreasing their step size to zero.
  We present a DP variant of HMC using the MH acceptance test that builds on a recently proposed
  DP MCMC algorithm called the penalty algorithm, and adds noise to the gradient evaluations of HMC.
  We prove that the resulting algorithm converges to the correct distribution,
  and is ergodic. We compare DP-HMC with the existing penalty,
  DP-SGLD and DP-SGNHT algorithms, and find that DP-HMC has better or equal performance than
  the penalty algorithm, and performs more consistently than DP-SGLD or DP-SGNHT.
\end{abstract}

\section{Introduction}\label{intro_section}

Differential privacy (DP) \cite{DMN06} has been widely accepted as the
standard approach for developing privacy-preserving algorithms that
guarantee that the output of the algorithm cannot be used to violate
the privacy of the subjects of the input data. Bayesian inference is
one of the widely used approaches for analysis of potentially
sensitive data. In this paper, we present the first DP version of the
modern Bayesian workhorse, Hamiltonian Monte Carlo (HMC)~\cite{DKP87},
with provable convergence to the exact posterior under fixed step
lengths.

HMC is a Markov chain Monte Carlo (MCMC) algorithm that makes use of
gradients of the target density to form a Hamiltonian system that can
be accurately simulated numerically to generate very long jumps with a
high acceptance rate. HMC scales better to higher dimensions than
other MCMC algorithms. New variants \cite{HoG14,HoffmanRS21} that
avoid problem-specific tuning make it an ideal choice for efficient
and accurate black box inference.

Like all MCMC algorithms, HMC requires careful specification of the
algorithm to guarantee convergence to the desired target. This makes
the development of DP MCMC algorithms challenging. The first DP MCMC
algorithms, such as DP stochastic gradient Langevin dynamics (DP-SGLD)
and DP stochastic gradient Nosé-Hoover thermostat (DP-SGNHT),
were based on gradient perturbation for stochastic
gradient MCMC without a Metropolis--Hastings accept/reject
step~\cite{WFS15,LCL19}. These algorithms come with very weak
convergence guarantees requiring decreasing the step size to 0.

The first DP MCMC algorithms implementing an accept/reject step that
enables convergence with fixed step lengths appeared only in 2019
\cite{YildirimE19,HeikkilaJDH19}. Our work builds upon the DP-penalty
algorithm~\cite{YildirimE19} that uses the penalty method \cite{CeD99}
to compensate for the noise added for DP by decreasing the acceptance
rate in a specific way. We adapt the DP-penalty method for HMC, adding
DP gradient evaluations. Our main contribution is the proof that the
resulting algorithm is ergodic and converges to the desired target.

\section{Background}\label{background_section}

In this section, we introduce main background material relevant to our work.
Section~\ref{dp_basics_section} introduces differential privacy and the
privacy accounting method we use. Section~\ref{hmc_section} introduces
MH algorithms and the HMC algorithm. Section~\ref{hmc_convergence_section}
is very technical, and contains the most relevant measure-theoretic background
material for our main theorem, the convergence proof of DP-HMC in
Theorem~\ref{dp_hmc_convergence_theorem}, and a proof that HMC converges to the
correct distribution, which serves as a preliminary to
Theorem~\ref{dp_hmc_convergence_theorem}.

\subsection{Differential Privacy}\label{dp_basics_section}

Differential privacy~\cite{DMN06} (DP) formalises the notion of a privacy-preserving algorithm
by requiring that the distribution of the output only changes slightly
given a change to a single individual's data. Of the many definitions, we use
\emph{approximate DP} (ADP)~\cite{DKM06}, also known as \((\epsilon, \delta)\)-DP:
\begin{definition}
  A mechanism \(\calm\colon \calx \to \R^{d}\) is \((\epsilon, \delta)\)-ADP
  for neighbourhood relation \(\sim\)
  if for all measurable \(S\subset \R^{d}\) and all \(X, X'\in \calx\) with
  \(X\sim X'\),
  \[
    P(\calm(X) \in S) \leq e^{\epsilon}P(\calm(X') \in S) + \delta.
  \]
\end{definition}
We exclusively focus on tabular data and the substitute neighbourhood relation
\(\sim_{S}\) which means that datasets \(X, X'\in \R^{n\times d_{x}}\) are neighbors in
\(\sim_{S}\)-relation, $X \sim_{S} X'$, if they differ in at most one row.
We use \(x\in X\) to denote that \(x\) is a row of \(X\).

DP has two attractive properties:
post-processing immunity means that applying a function to the output
of a DP mechanism does not change the privacy bounds, and composability
means that releasing the output of several DP algorithms together is DP,
although with worse privacy bounds~\cite{DwR14}. 

To make HMC DP, we use the \emph{Gaussian mechanism}~\cite{DKM06}, together
with post-processing immunity and composition.

\begin{definition}\label{gauss_mechanism_definition}
	The Gaussian mechanism with query \(f\colon \calx\to \R^{d}\) and noise
  variance \(\sigma^{2}\) releases a sample from
  \(f(X) + \caln(0, \sigma^{2}I)\) for input \(X\).
\end{definition}
To achieve DP, the output of the query of the Gaussian mechanism must not vary
too much with changing input:
it must have finite \emph{sensitivity}, and less sensitive queries give
smaller privacy bounds.

\begin{definition}\label{sensitivity_definition}
  The \(l_{2}\)-sensitivity of a function \(f\colon \calx \to \R^{d}\) is defined as
	\[
    \Delta_{2}f = \sup_{X\sim X'}||f(X) - f(X')||_{2}.
  \]
\end{definition}
To compute the privacy bounds for compositions of several Gaussian mechanisms, we use
the tight ADP bound of \citet{Sommer2019}:
\begin{restatable}{theorem}{gausscompositiontheorem}\label{gauss_composition_theorem}
  Let \(f_{i}\) be queries with \(\Delta_{2} f_{i} \leq \Delta_{i}\)
  for \(1\leq i \leq k\). Then the composition of \(k\) Gaussian mechanisms
  with queries \(f_{i}\) and noise variances \(\sigma_{i}^{2}\) for
  \(1\leq i \leq k\) is \((\epsilon, \delta(\epsilon))\)-ADP with
  \[
    \delta(\epsilon) = \frac{1}{2}\left(
      \erfc\left(\frac{\epsilon - \mu}{2\sqrt{\mu}}\right)
      -e^{\epsilon}\erfc\left(\frac{\epsilon + \mu}{2\sqrt{\mu}}\right)
    \right), \quad \textrm{where} \quad \mu = \sum_{i=1}^{k} \frac{\Delta_{i}^{2}}{2\sigma_{i}^{2}}.
  \]
\end{restatable}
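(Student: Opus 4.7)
The plan is to use the privacy loss distribution (PLD) framework, together with the fact that Gaussian mechanisms have a tractable, Gaussian-distributed privacy loss. I would first argue the worst-case reduction: for a single Gaussian mechanism with query $f_i$, any pair of neighbouring inputs $X \sim X'$ yields output distributions $\caln(f_i(X), \sigma_i^2 I)$ and $\caln(f_i(X'), \sigma_i^2 I)$ that differ only by a mean shift of norm at most $\Delta_i$. Since the hockey-stick divergence underlying $(\epsilon, \delta)$-ADP depends only on the projection of the outputs onto the line connecting the means, the worst case is attained by the one-dimensional pair $P_i = \caln(0, \sigma_i^2)$ vs.\ $Q_i = \caln(\Delta_i, \sigma_i^2)$; moreover, using less than the full sensitivity only decreases the divergence.

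Next I would compute the privacy loss random variable $L_i = \log(\dx Q_i/\dx P_i)(Y_i)$ for $Y_i$ drawn from the appropriate measure. A direct substitution of the Gaussian densities gives $L_i = (2 Y_i \Delta_i - \Delta_i^2)/(2\sigma_i^2)$ when $Y_i \sim Q_i$, so that $L_i \sim \caln(\mu_i, 2\mu_i)$ with $\mu_i = \Delta_i^2/(2\sigma_i^2)$. Because the noise across the $k$ mechanisms is independent, the PLD of the composition is
\[
    L = \sum_{i=1}^{k} L_i \sim \caln\!\left(\mu,\ 2\mu\right), \qquad \mu = \sum_{i=1}^{k} \frac{\Delta_i^2}{2\sigma_i^2}.
\]

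I would then invoke the standard identity relating ADP to the PLD,
\[
    \delta(\epsilon) = \mathbb{E}_{Y\sim Q}\!\left[\max\!\bigl(0,\ 1 - e^{\epsilon - L}\bigr)\right] = P_{Q}(L \geq \epsilon) - e^{\epsilon} P_{P}(L \geq \epsilon),
\]
where $L$ under $P$ has distribution $\caln(-\mu, 2\mu)$ by the same calculation as above (swapping the roles of $P$ and $Q$ reflects the sign of the mean). The first probability is $\frac12 \erfc\bigl((\epsilon - \mu)/(2\sqrt{\mu})\bigr)$ directly from the Gaussian CDF, and the second is $\frac12 \erfc\bigl((\epsilon + \mu)/(2\sqrt{\mu})\bigr)$, yielding the stated formula after multiplication by the $e^{\epsilon}$ factor.

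The main obstacles I expect are (i) the worst-case reduction to the 1D Gaussian pair, which requires verifying that the hockey-stick divergence is monotone under the relevant data-processing/projection step and maximised at norm exactly $\Delta_i$; and (ii) correctly tracking the measure under which the privacy loss is computed on each side of the decomposition $\delta(\epsilon) = P_Q(L\geq \epsilon) - e^{\epsilon} P_P(L\geq\epsilon)$, since the mean of $L$ flips sign between the two. Once those are in place, the remaining steps are routine Gaussian integrals and an application of additivity of means and variances for the independent sum defining $\mu$.
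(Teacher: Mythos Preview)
Your proposal is correct and follows essentially the same three-step PLD argument as the paper: the PLD of a single Gaussian mechanism is $\caln(\mu_i,2\mu_i)$, composition convolves PLDs to give $\caln(\mu,2\mu)$, and the conversion from a Gaussian PLD to $(\epsilon,\delta)$ yields the stated $\erfc$ formula. The paper simply cites these three facts from \citet{Sommer2019} rather than deriving them, whereas you carry out the Gaussian computations and the worst-case reduction explicitly.
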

\begin{proof}
  The claim follows from three theorems of \citet{Sommer2019}:
  first, the \emph{privacy loss distribution} (PLD) of a Gaussian mechanism with
  sensitivity \(\Delta\) and noise variance \(\sigma^{2}\) is
  \(\caln(\mu, 2\mu)\) with \(\mu = \frac{\Delta}{2\sigma^{2}}\).
  Second, the PLD of a composition of several mechanisms is the convolution of the PLDs of the mechanisms
  in the composition, so the PLD of a composition of Gaussian mechanisms
  with PLDs \(\caln(\mu_{i}, 2\mu_{i})\), \(1\leq i\leq k\), is
  \(\caln(\sum_{i=1}^{k}\mu_{i}, 2\sum_{i=1}^{k}\mu_{i})\).
  Finally, a mechanism with a PLD \(\caln(\mu, 2\mu)\) is
  \((\epsilon, \delta(\epsilon))\)-ADP with $\delta(\epsilon)$ given by
  \[
    \delta(\epsilon) = \frac{1}{2}\left(
      \erfc\left(\frac{\epsilon - \mu}{2\sqrt{\mu}}\right)
      -e^{\epsilon}\erfc\left(\frac{\epsilon + \mu}{2\sqrt{\mu}}\right)
    \right).\qedhere
  \]
\end{proof}

In this paper, the query \(f\colon\R^{n\times d_{x}}\to \R^{d}\) is always of the summative form
\(f(X) = \sum_{x\in X}g(x)\) with \(g\colon \R^{d_{x}}\to \R^{d}\), so
\(\Delta_{2}f = \sup_{x, x'\in \R^{d_{x}}}||g(x) - g(x')||_{2}.\) 
Moreover, we \emph{clip} the output of \(g\) to
have a bounded norm, i.e., instead of the function $g$, we consider the function
$\widetilde{g} = \clip_{b} \circ g$,
where \(\clip_{b}(y) = y\min\{\frac{b}{||y||_{2}}, 1\}\). Then clearly 
$\sup_{x, x'\in \R^{d_{x}}}||\widetilde{g}(x) - \widetilde{g}(x')||_{2} \leq 2b$.
Clipping bounds the sensitivity of $f$ and allows adding less noise to the query for equal $(\epsilon,\delta)$-DP guarantees.

\subsection{Metropolis-Hastings and Hamiltonian Monte Carlo}\label{hmc_section}
\emph{Markov chain Monte Carlo} (MCMC) algorithms sample from a distribution \(\pi\)
of \(\theta\) by forming an ergodic Markov chain that has the invariant
distribution \(\pi\)~\cite{Robert04}. The \emph{Metropolis-Hastings} (MH)~\cite{MRR53, Has70}
algorithm constructs the Markov chain by starting from a given point \(\theta_{0}\),
generating \(\theta_{i+1}\) given \(\theta_{i} = \theta\) by sampling a proposal
\(\theta'\) from a proposal distribution \(q(\theta'\mid \theta)\),
and accepting the \(\theta'\) with probability
\[
  \alpha(\theta, \theta')
  = \min\left\{1, \frac{\pi(\theta')}{\pi(\theta)}
    \frac{q(\theta\mid \theta')}{q(\theta'\mid \theta)} \right\}.
\]
If \(\theta'\) is accepted, \(\theta_{i+1} = \theta'\), otherwise \(\theta_{i+1} = \theta\).

The invariant distribution of an MH
algorithm is always \(\pi\), but the ergodicity of the resulting Markov chain
depends on the proposal. A convenient sufficient condition for ergodicity
is \emph{strong irreducibility}: if the proposal can propose any state from any other
state with positive probability, the chain is said to be strongly
ergodic, and thus irreducible~\cite{Robert04}.

MH is commonly used to sample from the posterior \(p(\theta\mid X)\) of a Bayesian
inference problem given by Bayes' theorem:
\[
  p(\theta\mid X) = \frac{p(X\mid \theta)p(\theta)}{p(X)},
\]
where \(\theta\in \R^{d}\) denotes the parameters of interest and \(X\) denotes the observed data.
For the MH algorithm, we set \(\pi(\theta) = p(\theta\mid X)\), and the
denominator in Bayes' theorem cancels out in
\(\alpha(\theta, \theta')\), so it is sufficient to consider
\(\pi(\theta)\propto p(X\mid \theta)p(\theta)\) for the MH algorithm.
Usually \(X\in \R^{n\times d_{x}}\) with each row of \(X\) representing a data point,
and the likelihood is \(p(X\mid \theta) = \prod_{x\in X}p(x\mid \theta)\), where
\(x\in X\) means that \(x\) is a row of \(X\).

The \emph{Hamiltonian Monte Carlo} (HMC)~\cite{DKP87, neal2012mcmc}
algorithm is an MH algorithm that generates proposals deterministically
through simulating Hamiltonian dynamics. The dynamics are given by the \emph{Hamiltonian}
\(H(\theta, p) = U(\theta) + \frac{1}{2}p^{T}M^{-1}p\), where \(p\in \R^{d}\) is an
auxiliary momentum variable, \(M\in \R^{d\times d}\) is a positive-definite mass matrix,
and \(U(\theta) = -\ln \pi(\theta)\). The simulation is then given by
Hamilton's equations
\(
  \frac{\dx \theta}{\dx t} = M^{-1}p,
  \frac{\dx p}{\dx t} = -\nabla U(\theta)
\).
Solving them exactly is rarely possible, so in practice
the simulation is carried out using \emph{leapfrog simulation}, given
for a step-size \(\eta > 0\) by
\[
  l = l_{p_{\nicefrac{\eta}{2}}} \circ l_{\theta}\circ l_{p_{\eta}}
  \dotsb \circ l_{p_{\eta}} \circ l_{\theta}\circ l_{p_{\nicefrac{\eta}{2}}},
\]
where
\[
  l_{p_{s}}(\theta, p) = (\theta, p - s\nabla U(\theta)), \quad
  l_{\theta}(\theta, p) = (\theta + \eta M^{-1}p, p).
\]

The definition of \(U\) means that \(\pi\) is required to be continuous,
supported on \(\R^{d}\), and have a differentiable
log-density~\cite{neal2012mcmc}. With the auxiliary variable \(p\), HMC targets
the distribution
\[
  \pi^{*}(\theta, p) \propto \exp(-H(\theta, p))
  = \exp(-U(\theta))\exp\left(-\frac{1}{2}p^{T}M^{-1}p\right),
\]
so the marginal distributions of \(\theta\) and \(p\) are independent,
the marginal of \(\theta\) is \(\pi\), and the marginal
of \(p\) is a \(d\)-dimensional Gaussian with mean \(0\) and covariance \(M\).

Proposing a new sample is done in two steps, both of which having a separate
MH acceptance test. First, \(p\) is sampled from its marginal distribution,
which is always accepted. Second, the leapfrog simulation is run and the
final value of \(p\) is negated, which gives a proposal for \((\theta, p)\).
The acceptance probability for the second step is
\[
  \alpha(\theta, p, \theta', p')
  = \min\{1, \exp(H(\theta, p) - H(\theta', p'))\}.
\]
In Section~\ref{hmc_convergence_section}, we will show that this acceptance
probability for the second step makes \(\pi^{*}\) the invariant distribution.
The proof requires some machinery from measure theory,
which is briefly introduced in Section~\ref{hmc_convergence_section},
and serves as a preliminary to our main result, the DP-HMC convergence proof, in
Section~\ref{dp_hmc_section}.

\subsection{Convergence of HMC}\label{hmc_convergence_section}

The proofs of convergence for HMC in Theorem~\ref{hmc_convergence_theorem}
and for DP-HMC in Theorem~\ref{dp_hmc_convergence_theorem} require some
theory of \emph{Markov kernels} and their \emph{reversibility}~\cite{Cin11},
presented in this section. We defer all proofs to either
Appendix~\ref{measure_theory_section}, or the textbook of \citet{Cin11},
with the exception of the proof of Theorem~\ref{hmc_convergence_theorem}, which is fairly
short and serves as a preliminary to the proof of our main result in
Theorem~\ref{dp_hmc_convergence_theorem}.

Recall that a \emph{measurable space} \((E, \cale)\) is a pair of a set \(E\) and
a \(\sigma\)-algebra \(\cale\), and an \emph{involution} is a function \(f\)
with \(f^{-1} = f\).

\begin{restatable}{definition}{markovkerneldefinition}\label{markov_kernel_definition}
  Let \((E, \cale)\) be a measurable space and let
  \(q\colon E\times \cale \to [0, 1]\). \(q\) is called a Markov kernel
  on \((E, \cale)\) if
  \begin{enumerate}
    \item For all \(B\in \cale\), the function \(q(\cdot, B)\) is measurable.
    \item For all \(a\in E\), the function \(q(a, \cdot)\) is a probability
          measure.
  \end{enumerate}
\end{restatable}

Markov kernels are the measure-theoreric formulation of random functions.
The involutiveness of deterministic functions generalises to reversibility
of Markov kernels, as seen in Lemma~\ref{dirac_reversible_lemma}.

\begin{restatable}{definition}{markovkernelreversible}\label{markov_kernel_reversible_definition}
	Let \(q\) be a Markov kernel and let \(\mu\) be a \(\sigma\)-finite measure, both on
  \((E, \cale)\). If
  \[
    \int_{A}\mu(\dx a)\int_{B}q(a, \dx b) = \int_{B}\mu(\dx b)\int_{A}q(b, \dx a)
  \]
  for all \(A, B\in \cale\), \(q\) is said to be reversible with respect to
  \(\mu\).
\end{restatable}
Definition~\ref{markov_kernel_reversible_definition} can be seen as an equality
of two measures using a lemma from measure theory:
\begin{lemma}\label{markov_kernel_product_unique_lemma}
	Let \((E, \cale)\) be a measurable space and let \(q\) be a Markov kernel
  and \(\mu\) be a \(\sigma\)-finite measure, both on \((E, \cale)\). Then there exists a unique
  \(\sigma\)-finite measure \(\nu\) on \((E, \cale)^{2}\) such that
  \[
    \nu(A\times B) = \int_{A}\mu(\dx a)\int_{B}q(a, \dx b)
  \]
  for all \(A, B\in \cale\).
\end{lemma}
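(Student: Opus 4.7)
The plan is to construct $\nu$ explicitly on the collection of measurable rectangles via the given formula, extend it to the product $\sigma$-algebra $\cale \otimes \cale$ by Carathéodory's extension theorem, and then establish uniqueness via Dynkin's $\pi$-$\lambda$ theorem together with the $\sigma$-finiteness that we verify along the way.

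First I would check that the defining formula even makes sense. By the first clause of Definition~\ref{markov_kernel_definition}, the map $a \mapsto q(a, B)$ is $\cale$-measurable for each fixed $B \in \cale$, so the inner integral is a measurable function of $a$ and the outer integral against $\mu$ is well-defined. This gives a set function $\nu_0$ on the semiring of measurable rectangles. Next I would show $\nu_0$ extends to a pre-measure on the algebra $\calf$ generated by rectangles. Finite additivity on disjoint rectangles is immediate from linearity of the integral together with additivity of $q(a, \cdot)$. The main obstacle is countable additivity: given a countable disjoint family $\{A_n \times B_n\}$ whose union lies in $\calf$, I would write the indicator of the union as a pointwise sum of indicators of rectangles, integrate first against $q(a, \cdot)$ to get a countable sum of nonnegative measurable functions of $a$, and then swap sum and outer $\mu$-integral using the monotone convergence theorem. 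The crucial input here is that $q(a, \cdot)$ is itself countably additive for each $a$, which is the second clause of Definition~\ref{markov_kernel_definition}.

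For $\sigma$-finiteness, I would exploit the $\sigma$-finiteness of $\mu$: pick a countable partition $E = \bigsqcup_n E_n$ with $\mu(E_n) < \infty$. Because $q(a, E) = 1$ for every $a$, we get $\nu_0(E_n \times E) = \int_{E_n} q(a, E)\,\mu(\dx a) = \mu(E_n) < \infty$, and the sets $E_n \times E$ cover $E \times E$, so $\nu_0$ is $\sigma$-finite on its domain. Carathéodory's extension theorem then produces a $\sigma$-finite measure $\nu$ on $\cale \otimes \cale$ extending $\nu_0$.

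Finally, for the uniqueness statement in the lemma, I would invoke the standard $\pi$-$\lambda$ argument. The measurable rectangles form a $\pi$-system generating $\cale \otimes \cale$, and any two $\sigma$-finite measures that agree on a generating $\pi$-system must agree on the whole $\sigma$-algebra, provided there is a countable exhaustion by sets of finite measure on which they agree; the family $\{E_n \times E\}$ above supplies exactly such an exhaustion. Hence $\nu$ is uniquely determined by its values on rectangles, completing the proof.
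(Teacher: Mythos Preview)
Your argument is correct. Note, however, that the paper does not actually prove this lemma: its proof consists solely of the citation ``See \citet[Theorem I.6.11]{Cin11}.'' So there is no in-paper argument to compare against beyond the textbook one being invoked.

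Relative to that standard textbook treatment, your route is a legitimate but slightly more roundabout alternative. The approach in \citet{Cin11} (and most measure-theory texts) is more direct: one first shows, via a monotone-class or $\pi$-$\lambda$ argument, that for every $C\in\cale\otimes\cale$ the section map $a\mapsto q(a,C_a)$ is $\cale$-measurable, where $C_a=\{b:(a,b)\in C\}$, and then \emph{defines} $\nu(C)=\int_E q(a,C_a)\,\mu(\dx a)$ on all of $\cale\otimes\cale$ at once. Countable additivity of $\nu$ then comes for free from countable additivity of each $q(a,\cdot)$ together with monotone convergence, and uniqueness is immediate because $\nu$ is given by an explicit formula on the full $\sigma$-algebra. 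Your construction trades that single monotone-class step for the semiring/pre-measure/Carath\'eodory machinery plus a separate $\pi$-$\lambda$ uniqueness argument at the end; both reach the same conclusion, but the section-based definition is shorter and avoids having to check that the value on a finite disjoint union of rectangles is representation-independent.
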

\begin{proof}
	See \citet[Theorem I.6.11]{Cin11}.
\end{proof}
Using the uniqueness in Lemma~\ref{markov_kernel_product_unique_lemma},
the equality in Definition~\ref{markov_kernel_reversible_definition} can be
stated as an equality of measures: for a measurable space \((E, \cale)\), setting
\[
  \nu_{1}(A\times B) = \int_{A}\mu(\dx a)\int_{B}q(a, \dx b),
\]
\[
  \nu_{2}(A\times B) = \int_{B}\mu(\dx b)\int_{A}q(b, \dx a)
\]
for all \(A, B\in \cale\) defines unique measures \(\nu_{1}\) and \(\nu_{2}\)
on \((E, \cale)^{2}\). Definition~\ref{markov_kernel_reversible_definition}
is then equivalent to \(\nu_{1} = \nu_{2}\).

As Markov kernels represent randomised functions, they can be composed with each
other, with the composition being another Markov kernel:
\begin{lemma}\label{markov_kernel_composition_lemma}
	The composition of Markov kernels \(q_{1}\) and \(q_{2}\) on a measurable
  space \((E, \cale)\) is a Markov kernel given by
  \[
    (q_{2}\circ q_{1})(a, C) = \int_{E}q_{1}(a, \dx b)q_{2}(b, C)
    = \int_{E}q_{1}(a, \dx b)\int_{C}q_{2}(b, \dx c)
  \]
  for any \(C\in \cale\).
\end{lemma}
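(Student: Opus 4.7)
The plan is to verify the two defining properties of a Markov kernel from Definition~\ref{markov_kernel_definition} for the candidate map $(q_2 \circ q_1)(a, C) = \int_{E} q_1(a, \dx b)\, q_2(b, C)$, and along the way to observe that the second displayed formula is just an unfolding of the first, since $q_2(b, C) = \int_{C} q_2(b, \dx c) = \int_{E} \mathbf{1}_{C}(c)\, q_2(b, \dx c)$ by definition of the integral of an indicator against the probability measure $q_2(b, \cdot)$.

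First I would handle condition (1), measurability of $a \mapsto (q_2 \circ q_1)(a, C)$ for fixed $C \in \mathcal{E}$. The integrand $b \mapsto q_2(b, C)$ is measurable by property (1) applied to $q_2$. The remaining step is the standard fact that if $q_1$ is a Markov kernel on $(E, \mathcal{E})$ and $f \colon E \to [0,\infty]$ is measurable, then $a \mapsto \int q_1(a, \dx b)\, f(b)$ is measurable. This is proved by the usual monotone-class argument: for $f = \mathbf{1}_{B}$ the integral equals $q_1(a, B)$, which is measurable by property (1) applied to $q_1$; linear combinations (simple functions) remain measurable; and arbitrary non-negative measurable $f$ are pointwise monotone limits of simple functions, so measurability is preserved by monotone convergence. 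Applying this with $f(b) = q_2(b, C)$ yields the desired measurability.

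Next I would verify condition (2), that $(q_2 \circ q_1)(a, \cdot)$ is a probability measure for each fixed $a \in E$. Nullity on the empty set, $(q_2 \circ q_1)(a, \emptyset) = \int q_1(a, \dx b)\, q_2(b, \emptyset) = 0$, and total mass one, $(q_2 \circ q_1)(a, E) = \int q_1(a, \dx b) \cdot 1 = q_1(a, E) = 1$, both follow from the fact that $q_1(a, \cdot)$ and $q_2(b, \cdot)$ are probability measures. For countable additivity, take pairwise disjoint $C_1, C_2, \ldots \in \mathcal{E}$; since $q_2(b, \cdot)$ is a measure,
\[
q_2\!\left(b, \bigcup_{n} C_n\right) = \sum_{n=1}^{\infty} q_2(b, C_n),
\]
and the partial sums are non-negative and monotonically increasing in $n$, so monotone convergence lets me interchange sum and integral against $q_1(a, \cdot)$ to obtain $(q_2 \circ q_1)\bigl(a, \bigcup_n C_n\bigr) = \sum_n (q_2 \circ q_1)(a, C_n)$.

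There is no real obstacle here; the one step that needs a bit of care is the measurability argument in condition (1), because it implicitly requires the monotone-class / simple-function approximation scheme rather than a one-line check. Once that is in place, the rest is routine verification using only that $q_1(a, \cdot)$ and $q_2(b, \cdot)$ are probability measures and that monotone convergence applies to the non-negative integrands involved.
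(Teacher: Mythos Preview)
Your argument is correct and is precisely the standard verification one finds in a probability text: the paper itself does not prove this lemma but simply refers the reader to \cite[Remark I.6.4]{Cin11}, and your monotone-class argument for measurability together with monotone convergence for countable additivity is exactly what that citation unpacks to.
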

\begin{proof}
  See~\citet[Remark I.6.4]{Cin11}.
\end{proof}

A composition of reversible Markov kernels is not itself reversible, but
it does have a closely related property that implies reversibility if the composition
is symmetric:
\begin{restatable}{lemma}{markovkernelreversiblecomposition}\label{markov_kernel_reversible_composition_lemma}
	Let \(q_{1},\dotsc, q_{k}\) be Markov kernels on \((E, \cale)\)
  reversible with respect to a \(\sigma\)-finite measure \(\mu\) on
  \((E, \cale)\). Then
  \[
    \int_{A}\mu(\dx a)\int_{C}(q_{k}\circ \dotsb \circ q_{1})(a, \dx c)
    = \int_{C}\mu(\dx c)\int_{A}(q_{1}\circ \dotsb \circ q_{k})(c, \dx a)
  \]
  for all \(A, C\in \cale\).
\end{restatable}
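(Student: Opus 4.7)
The plan is to proceed by induction on $k$. The base case $k = 1$ is precisely the reversibility of $q_{1}$ with respect to $\mu$ from Definition~\ref{markov_kernel_reversible_definition}. For the inductive step, assume the claim holds for $k-1$ kernels, and expand the composition using Lemma~\ref{markov_kernel_composition_lemma}:
\[
  \int_{A}\mu(\dx a)\int_{C}(q_{k}\circ \dotsb \circ q_{1})(a, \dx c)
  = \int_{A}\mu(\dx a)\int_{E}(q_{k-1}\circ \dotsb \circ q_{1})(a, \dx b)\, q_{k}(b, C).
\]

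Next I would upgrade the inductive hypothesis from its set-indicator form to a functional form. By Lemma~\ref{markov_kernel_product_unique_lemma}, the hypothesis says that the two measures on $(E,\cale)^{2}$ given by $(A,B)\mapsto \int_{A}\mu(\dx a)\,(q_{k-1}\circ\dotsb\circ q_{1})(a,B)$ and $(A,B)\mapsto \int_{B}\mu(\dx b)\,(q_{1}\circ\dotsb\circ q_{k-1})(b,A)$ coincide, and a standard monotone class / monotone convergence argument then gives
\[
  \int_{A}\mu(\dx a)\int_{E}(q_{k-1}\circ \dotsb \circ q_{1})(a, \dx b)\, f(b)
  = \int_{E}\mu(\dx b)\, f(b) \int_{A}(q_{1}\circ \dotsb \circ q_{k-1})(b, \dx a)
\]
for any nonnegative measurable $f$. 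Applying this with $f(b) = q_{k}(b, C) = \int_{C} q_{k}(b,\dx c)$ turns the previous expression into
\[
  \int_{E}\mu(\dx b)\int_{C} q_{k}(b,\dx c)\,(q_{1}\circ \dotsb \circ q_{k-1})(b, A).
\]

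Now I would apply reversibility of $q_{k}$, again in its functional extension (same argument as above), with the integrand $(b,c)\mapsto \mathbf{1}_{C}(c)(q_{1}\circ\dotsb\circ q_{k-1})(b,A)$, to swap $b$ and $c$:
\[
  = \int_{C}\mu(\dx c)\int_{E} q_{k}(c,\dx b)\,(q_{1}\circ \dotsb \circ q_{k-1})(b, A).
\]
Finally, one more application of Lemma~\ref{markov_kernel_composition_lemma} collapses the inner integral into a single composed kernel, $\int_{E} q_{k}(c,\dx b)(q_{1}\circ\dotsb\circ q_{k-1})(b,A) = (q_{1}\circ \dotsb \circ q_{k-1}\circ q_{k})(c,A)$, which rewrites as $\int_{A}(q_{1}\circ \dotsb \circ q_{k})(c,\dx a)$, matching the right-hand side of the claim.

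The main obstacle is not algebraic but measure-theoretic: the hypotheses are phrased as equalities of integrals over product sets, yet the proof needs to integrate against the measurable functions $b\mapsto q_{k}(b,C)$ and $(b,c)\mapsto \mathbf{1}_{C}(c)(q_{1}\circ\dotsb\circ q_{k-1})(b,A)$. Handling this extension cleanly via the uniqueness from Lemma~\ref{markov_kernel_product_unique_lemma} together with monotone convergence is the step requiring care; once it is in place, the rest is a symmetric bookkeeping exercise using the composition formula.
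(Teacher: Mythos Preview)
Your proposal is correct and follows essentially the same route as the paper: induction on $k$, expanding the outermost kernel via Lemma~\ref{markov_kernel_composition_lemma}, upgrading the set-form equalities to functional form, applying the induction hypothesis and the reversibility of $q_{k}$, and collapsing back. The paper packages the functional-extension step you describe into a separate auxiliary lemma (stating that equality on product sets implies equality of integrals against any nonnegative measurable $f$), proved exactly via the measure identification of Lemma~\ref{markov_kernel_product_unique_lemma}; your inline monotone-class argument is the same content.
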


The proposal of an MH algorithm is a Markov kernel. If it is reversible
with respect to the Lebesgue measure and the target distribution is continuous,
the Hastings correction term \(\frac{q(\theta\mid \theta')}{q(\theta'\mid \theta)} = 1\):

\begin{restatable}{lemma}{mhreversibleproposal}\label{mh_reversible_proposal_lemma}
	If the proposal Markov kernel \(q\) of an MH algorithm is reversible
  with respect to the Lebesgue measure and the target distribution
  \(\pi\) is continuous, using
  \[
    \alpha(\theta, \theta') = \min\left\{1, \frac{\pi(\theta')}{\pi(\theta)}\right\}
  \]
  as the acceptance probability leaves the target \(\pi\) invariant.
\end{restatable}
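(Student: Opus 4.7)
The plan is to prove the detailed balance relation
\[
  \int_{A}\pi(\theta)\,\dx\theta\, K(\theta, B) = \int_{B}\pi(\theta')\,\dx\theta'\, K(\theta', A)
\]
for all measurable \(A, B\), where \(K(\theta, \dx\theta') = \alpha(\theta, \theta') q(\theta, \dx\theta') + r(\theta)\delta_{\theta}(\dx\theta')\) is the MH transition kernel and \(r(\theta) = 1 - \int \alpha(\theta, \theta') q(\theta, \dx\theta')\) is the rejection probability. Setting \(A = \R^{d}\) then immediately yields \(\int \pi(\theta)\,\dx\theta\, K(\theta, B) = \pi(B)\), the invariance of \(\pi\).

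First I would split each side of the detailed balance identity into an acceptance part and a rejection part. The rejection term contributes \(\int_{A\cap B} r(\theta)\pi(\theta)\,\dx\theta\) on both sides, so it cancels automatically. For the acceptance part, the crucial algebraic observation is the symmetric identity
\[
  \pi(\theta)\alpha(\theta, \theta') = \min\{\pi(\theta), \pi(\theta')\} = \pi(\theta')\alpha(\theta', \theta),
\]
which follows directly from the definition of \(\alpha\); continuity of \(\pi\) ensures that this pointwise identity is unambiguous.

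Next I would lift the problem onto the product space by applying Lemma~\ref{markov_kernel_product_unique_lemma} with \(\mu\) the Lebesgue measure to obtain the unique \(\sigma\)-finite measure \(\nu_{1}\) on \((\R^{d}, \cale)^{2}\) with \(\nu_{1}(A\times B) = \int_{A}\dx\theta \int_{B} q(\theta, \dx\theta')\), and analogously a measure \(\nu_{2}\) with \(\nu_{2}(A\times B) = \int_{B}\dx\theta' \int_{A} q(\theta', \dx\theta)\). Reversibility of \(q\) with respect to Lebesgue, combined with the uniqueness clause of Lemma~\ref{markov_kernel_product_unique_lemma}, gives \(\nu_{1} = \nu_{2}\) as measures on \((\R^{d},\cale)^{2}\). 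The two acceptance contributions can then both be written as \(\int_{A\times B} \min\{\pi(\theta),\pi(\theta')\}\,\nu_{i}(\dx\theta, \dx\theta')\) for \(i=1,2\), and are therefore equal.

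The main obstacle is that the proposal kernel \(q\) is not assumed to admit a Lebesgue density, so the usual pointwise detailed-balance argument involving a Hastings ratio \(q(\theta'\mid\theta)/q(\theta\mid\theta')\) is unavailable. The workaround is to use reversibility as an equality of product measures on \((\R^{d},\cale)^{2}\) and to absorb the symmetric factor \(\min\{\pi(\theta),\pi(\theta')\}\) into an integral against either one. Once this measure-theoretic framing is in place, the proof reduces to the algebraic identity \(\pi(\theta)\alpha(\theta,\theta') = \pi(\theta')\alpha(\theta',\theta)\) plus one invocation of Lemma~\ref{markov_kernel_product_unique_lemma}.
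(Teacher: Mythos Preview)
Your proposal is correct and follows essentially the same route as the paper: both arguments reduce to the symmetric identity \(\pi(\theta)\alpha(\theta,\theta') = \min\{\pi(\theta),\pi(\theta')\}\) and then use Lebesgue-reversibility of \(q\), viewed as the equality of the two product measures from Lemma~\ref{markov_kernel_product_unique_lemma}, to swap the order of integration. The only cosmetic difference is that you write out the full MH kernel and explicitly cancel the rejection part, whereas the paper invokes Tierney's formulation of detailed balance (which already absorbs that cancellation) and then performs the same chain of equalities, implicitly using Corollary~\ref{markov_kernel_reversible_integral_corollary} for the swap step.
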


For a deterministic proposal \(f\), like the HMC leapfrog, the Markov kernel of the
proposal is a \emph{Dirac measure} \(\delta_{f(\theta)}(B) = 1_{B}(f(\theta))\) for
\(\theta\in \R^{d}\) and measurable \(B\subset \R^{d}\).
It turns out that \(\delta_{f(\theta)}\) is a reversible Markov kernel for a
suitable \(f\):
\begin{restatable}{lemma}{diracreversible}\label{dirac_reversible_lemma}
	Let \(f\colon \R^{d}\to \R\) be an involution that preserves Lebesgue measure.
  Then the Dirac measure \(\delta_{f(a)}\), seen as a Markov kernel
  \(q(a, B) = \delta_{f(a)}(B)\), is reversible with respect to the Lebesgue measure.
\end{restatable}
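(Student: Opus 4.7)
The plan is to unfold the reversibility condition using the specific form of the kernel $q(a, B) = \delta_{f(a)}(B) = 1_{B}(f(a))$ and then reduce everything to a statement about Lebesgue measures of intersections, where the involution and measure-preservation hypotheses can be combined.

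First I would compute the inner integrals on both sides of the reversibility identity. Since $\int_{B} \delta_{f(a)}(\dx b) = 1_{B}(f(a))$, the left-hand side of Definition~\ref{markov_kernel_reversible_definition} with $\mu$ the Lebesgue measure $\lambda$ becomes
\[
\int_{A} 1_{B}(f(a))\,\lambda(\dx a) = \lambda\bigl(A \cap f^{-1}(B)\bigr),
\]
and symmetrically the right-hand side becomes $\lambda(B \cap f^{-1}(A))$. Using $f^{-1} = f$, the goal reduces to showing
\[
\lambda(A \cap f(B)) = \lambda(B \cap f(A))
\]
for all measurable $A, B \subset \R^{d}$. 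Note that because $f$ is an involution, it is its own inverse and hence measurable, so $f(A)$ and $f(B)$ are measurable whenever $A$ and $B$ are.

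Next I would apply $f$ to one of the two sets and use involution to collapse double applications. Specifically,
\[
f\bigl(A \cap f(B)\bigr) = f(A) \cap f(f(B)) = f(A) \cap B,
\]
where the first equality holds because $f$ is a bijection (its inverse being itself) and the second uses $f \circ f = \mathrm{id}$. Finally, applying the hypothesis that $f$ preserves Lebesgue measure,
\[
\lambda(A \cap f(B)) = \lambda\bigl(f(A \cap f(B))\bigr) = \lambda(f(A) \cap B),
\]
which is exactly what we needed. This closes the proof.

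I do not expect any serious obstacle: the argument is essentially a one-line manipulation once the Dirac-kernel integrals are evaluated. The only point requiring a small amount of care is confirming that the image sets $f(A), f(B)$ are measurable, which is immediate from $f^{-1} = f$ and the fact that $f$ (being involutive and measure-preserving, hence in particular measurable in the usual sense assumed for such statements) is measurable. Beyond that, the three ingredients—the Dirac identity, the involution property collapsing $f \circ f$, and measure-preservation—slot together with no further computation.
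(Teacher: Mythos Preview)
Your proposal is correct and follows essentially the same approach as the paper: evaluate the Dirac integral to an indicator, reduce both sides to Lebesgue measures of intersections, and then use measure-preservation together with the involution identity \(f(A\cap f(B))=f(A)\cap B\) to match them. The paper writes the key step with \(f^{-1}\) in place of \(f\), but since \(f=f^{-1}\) this is purely notational.
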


The invariance of the target distribution for HMC follows from
Lemmas~\ref{mh_reversible_proposal_lemma} and \ref{dirac_reversible_lemma}.
\begin{restatable}{theorem}{hmcconvergencetheorem}\label{hmc_convergence_theorem}
	For a continuous distribution \(\pi\) that is supported on \(\R^d\) and
  has a differentiable log-density, if
  \[
    \alpha(\theta, p, \theta', p')
    = \min\{1, \exp(H(\theta, p) - H(\theta', p'))\}
  \]
  is used as the acceptance probability for HMC, the invariant distribution is
  \(\pi^{*}(\theta, p) \propto \exp(-H(\theta, p))\).
\end{restatable}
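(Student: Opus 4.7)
The plan is to decompose one HMC transition into its two constituent sub-steps and show that each leaves \(\pi^{*}\) invariant; since a composition of Markov kernels that each preserve a distribution also preserves it, this will give the theorem. The first sub-step resamples \(p\) from \(\caln(0, M)\). Because \(\pi^{*}(\theta, p) \propto \exp(-U(\theta))\exp(-\tfrac{1}{2}p^{T}M^{-1}p)\) factorises into independent marginals in \(\theta\) and \(p\), drawing a fresh \(p\) from its \(\pi^{*}\)-marginal trivially preserves \(\pi^{*}\).

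For the second sub-step, the proposal is deterministic, given by \(F = N \circ l\), where \(N(\theta, p) = (\theta, -p)\) is momentum negation and \(l\) is the leapfrog map; the corresponding Markov kernel is the Dirac kernel \(q((\theta,p), B) = \delta_{F(\theta,p)}(B)\). Since \(\pi^{*} \propto \exp(-H)\), the stated acceptance probability equals \(\min\{1, \pi^{*}(F(\theta,p))/\pi^{*}(\theta,p)\}\), so \(\pi^{*}\) is continuous on \(\R^{2d}\) and Lemma~\ref{mh_reversible_proposal_lemma} applies as soon as \(q\) is shown to be reversible with respect to the Lebesgue measure. By Lemma~\ref{dirac_reversible_lemma}, this reduces to verifying that \(F\) is a Lebesgue-measure-preserving involution on \(\R^{2d}\).

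Both properties of \(F\) follow from inspection of the leapfrog definition. For volume preservation, each factor \(l_{p_{s}}\) and \(l_{\theta}\) updates only one of the two coordinate blocks, and its update depends only on the other block; each Jacobian is therefore triangular with unit diagonal and determinant \(1\), so the composition \(l\) preserves Lebesgue measure, as does the reflection \(N\). For involutivity, I would establish the time-reversal identity \(N \circ l \circ N = l^{-1}\) using the symmetric factorisation of \(l\) together with \(N \circ l_{p_{s}} \circ N = l_{p_{-s}} = (l_{p_{s}})^{-1}\) and \(N \circ l_{\theta} \circ N = (l_{\theta})^{-1}\), reversing the order of composition. Given this identity,
\[
  F \circ F \;=\; N \circ l \circ N \circ l \;=\; l^{-1} \circ l \;=\; \mathrm{id}.
\]

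The main technical obstacle is the time-reversal identity \(N \circ l \circ N = l^{-1}\); it relies essentially on the fact that \(l\) begins and ends with half-steps of \(l_{p}\) and that \(l_{\theta}\) and \(l_{p}\) interleave symmetrically, so reversing order after conjugation by \(N\) reproduces the inverse factorisation. Once this is in hand the rest is a direct assembly: Lemma~\ref{dirac_reversible_lemma} upgrades involutivity and volume preservation to reversibility of the proposal kernel, Lemma~\ref{mh_reversible_proposal_lemma} upgrades this to \(\pi^{*}\)-invariance of the accept/reject step, and composition with the momentum-refresh step yields the claim.
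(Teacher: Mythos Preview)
Your proposal is correct and follows essentially the same route as the paper: verify that \(F = l_{-}\circ l\) is a Lebesgue-measure-preserving involution, invoke Lemma~\ref{dirac_reversible_lemma} to obtain reversibility of the Dirac proposal kernel, and then apply Lemma~\ref{mh_reversible_proposal_lemma}. The only difference is that you spell out the arguments for volume preservation (triangular Jacobians) and involutivity (the time-reversal identity \(N\circ l\circ N = l^{-1}\)), whereas the paper simply asserts the former and cites Neal for the latter; your added treatment of the momentum-refresh step is likewise consistent with the paper's brief remark that resampling \(p\) from its marginal is always accepted.
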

\begin{proof}
  The proposal for the second step is given by \(l_{-}\circ l\), where
  \(l_{-}(\theta, p) = (\theta, -p)\).
  As \((l_{-}\circ l)^{-1} = l_{-}\circ l\)~\cite{neal2012mcmc} and
  each of \(l_{-}\), \(l_{p_{s}}\)
  and \(l_{\theta}\) preserve Lebesgue measure, the HMC proposal Markov kernel
  \(\delta_{(l_{-}\circ l)(\theta, p)}\) is reversible with respect to the
  Lebesgue measure by Lemma~\ref{dirac_reversible_lemma}. Then, by
  Lemma~\ref{mh_reversible_proposal_lemma}, \(\pi^{*}\) is the invariant distribution of
  HMC.
\end{proof}

Showing that HMC is ergodic is much harder due to the
deterministic proposal, but it can be shown that HMC is ergodic with mild
assumptions on \(U\)~\cite{DMS20}.

\section{DP-HMC}\label{dp_hmc_section}

The DP-penalty algorithm of \citet{YildirimE19} makes the MH acceptance test
private by adding Gaussian noise to the log-likelihood ratio
\(\lambda(\theta, \theta') = \ln\frac{p(X\mid \theta')p(\theta')}{p(X\mid \theta)p(\theta)}\).
They correct the MH acceptance probability with the penalty algorithm~\cite{CeD99},
that changes the acceptance probability to
\[
  \alpha(\theta, \theta') = \min\left\{1,
    \exp\left(\lambda(\theta, \theta') + \xi
      + \ln \frac{q(\theta\mid \theta')}{q(\theta'\mid \theta)}
      - \frac{1}{2}\sigma_{l}^{2}(\theta, \theta')\right)\right\},
\]
where \(\xi \sim \caln(0, \sigma_{l}^{2}(\theta, \theta'))\) is the Gaussian noise
added to the log likelihood ratio. For the DP-penalty algorithm,
\(\sigma_{l}(\theta, \theta') = 2\tau b_{l}||\theta - \theta'||_{2}\),
\(b_{l}||\theta - \theta'||_{2}\) is
the log-likelihood ratio clip bound and \(\tau > 0\) controls the amount of noise.

The privacy bounds for the algorithm are given by
Theorem~\ref{gauss_composition_theorem} with \(\mu_{i} = \frac{1}{2\tau^{2}}\).
The convergence of the penalty algorithm requires that the log-likelihood ratios
are not actually clipped, which can only be ensured on some models, like
Bayesian logistic regression~\cite{YildirimE19}. However, in our experiments
shown in Section~\ref{clipping_experiment_section},
small amounts of clipping did not affect the resulting posterior.

\citet{YildirimE19} only used the Gaussian
distribution as the proposal, but the DP-penalty algorithm does not require
any particular proposal distribution \(q\).
However, if \(q\) depends on
the private data \(X\), both sampling \(q\) and computing
\(\ln\frac{q(\theta\mid \theta')}{q(\theta'\mid \theta)}\) may have a privacy cost
that must be taken into account.

In non-DP HMC, the proposal is the deterministic leapfrog simulation, which
can be made DP by simply clipping the gradients of the
log-likelihood and adding Gaussian noise.

In Theorem~\ref{dp_hmc_convergence_theorem},
we show that applying the penalty correction to the HMC acceptance probability
from Theorem~\ref{hmc_convergence_theorem} results in the correct invariant distribution when
using noisy and clipped gradients in the leapfrog simulation.
We also prove the ergodicity of DP-HMC, which
turns out to be much easier because of the noisy leapfrog, in
Theorem~\ref{dp_hmc_ergodic_theorem}.

In the noisy and clipped leapfrog simulation, the momentum update changes to
\[
  l_{p_{s}}(\theta, p) = (\theta, p - s (g(\theta) + \xi)),
\]
where
\[
  g(\theta) = \sum_{x\in X} \clip_{b}(\nabla \ln p(x \mid \theta))
  + \nabla \ln p(\theta)
\]
and \(\xi \sim \caln(0, \sigma_{g}^{2})\). The noisy and clipped leapfrog is then
\[
  l = l_{p_{\nicefrac{\eta}{2}}} \circ l_{\theta}\circ l_{p_{\eta}}
  \dotsb \circ l_{p_{\eta}} \circ l_{\theta}\circ l_{p_{\nicefrac{\eta}{2}}}.
  \label{dp_hmc_leapfrog_equation}
\]
As \(l_{-}\) is an involution, \(l_{-}\circ l\) can be decomposed as
\[
  l_{-}\circ l =
  (l_{-}\circ p_{\nicefrac{\eta}{2}}) \circ (l_{\theta}\circ l_{-})\circ (l_{-}\circ l_{p_{\eta}})
  \dotsb \circ (l_{-}\circ l_{p_{\eta}})
  \circ (l_{\theta}\circ l_{-})\circ (l_{-}\circ l_{p_{\nicefrac{\eta}{2}}}).
\]
Denoting \(l_{p_{s}}^{-} = l_{-}\circ l_{p_{s}}\) and
\(l_{\theta}^{-} = l_{\theta}\circ l_{-}\), the decomposition can be written as
\[
  l_{-}\circ l = l_{p_{\nicefrac{\eta}{2}}}^{-}\circ l_{\theta}^{-}\circ l_{p_{\eta}}^{-}\circ
  \dotsb \circ l_{p_{\eta}}^{-}\circ l_{\theta}^{-}\circ l_{p_{\nicefrac{\eta}{2}}}^{-}.
\]
This form makes showing that DP-HMC has the correct invariant distribution
convenient.

\begin{restatable}{lemma}{dphmcleapfrogstepreversibility}\label{dp_hmc_leapfrog_step_reversibility_lemma}
	The Markov kernels \(l_{p_{\nicefrac{\eta}{2}}}^{-}\), \(l_{p_{\eta}}^{-}\) and \(l_{\theta}^{-}\) are reversible
  with respect to the Lebesgue measure.
\end{restatable}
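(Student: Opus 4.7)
The plan is to treat the three kernels separately. The kernel $l_\theta^-$ is deterministic, so I would apply Lemma~\ref{dirac_reversible_lemma} after checking that $l_\theta^-$ is an involution that preserves Lebesgue measure. The kernels $l_{p_{\eta/2}}^-$ and $l_{p_\eta}^-$ are randomised because of the Gaussian gradient noise, so Lemma~\ref{dirac_reversible_lemma} does not apply; for these I would instead verify Definition~\ref{markov_kernel_reversible_definition} directly, reducing to product sets via Lemma~\ref{markov_kernel_product_unique_lemma} and exploiting the symmetry of the Gaussian density.

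\textbf{Deterministic step.} Unfolding the definition gives $l_\theta^-(\theta, p) = l_\theta(\theta, -p) = (\theta - \eta M^{-1} p,\, -p)$. Applying it a second time returns $(\theta, p)$, so $l_\theta^-$ is an involution. It is linear with block-triangular Jacobian having $I$ and $-I$ on the diagonal, hence its absolute determinant is $1$ and it preserves Lebesgue measure. Lemma~\ref{dirac_reversible_lemma} then yields reversibility of the corresponding Dirac kernel.

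\textbf{Stochastic step.} For $l_{p_s}^-$ with $s \in \{\eta/2, \eta\}$, unravelling the definitions shows that starting from $(\theta, p)$ the output $(\theta', p')$ satisfies $\theta' = \theta$ and $p' = -p + s g(\theta) + s\xi$ with $\xi \sim \caln(0, \sigma_g^2 I)$, so the kernel takes the form
\[
  l_{p_s}^-((\theta, p), C) = \int_{\R^d} \mathbf{1}_C(\theta, p')\, \phi\bigl(p + p' - s g(\theta)\bigr)\, \dx p',
\]
where $\phi$ is the density of $\caln(0, s^2 \sigma_g^2 I)$. By Lemma~\ref{markov_kernel_product_unique_lemma} it suffices to check reversibility on rectangles $A_\theta \times A_p$ and $B_\theta \times B_p$; substituting the explicit kernel and applying Fubini, both sides of the identity in Definition~\ref{markov_kernel_reversible_definition} reduce to
\[
  \int_{A_\theta \cap B_\theta} \dx \theta \int_{A_p} \dx p \int_{B_p} \dx p'\, \phi\bigl(p + p' - s g(\theta)\bigr),
\]
because the integrand is symmetric under $p \leftrightarrow p'$ and $\theta = \theta'$ on the support of the kernel. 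The main obstacle is precisely the stochastic nature of $l_{p_s}^-$, which rules out any direct appeal to Lemma~\ref{dirac_reversible_lemma}. Things work out because $\theta$ is unchanged by the kernel (so the drift $s g(\theta)$ coincides with $s g(\theta')$ under reversal) and because the Gaussian noise is symmetric about its mean in the required way.
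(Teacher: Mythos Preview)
Your proposal is correct and follows essentially the same route as the paper: handle $l_\theta^-$ via Lemma~\ref{dirac_reversible_lemma} after checking it is a Lebesgue-preserving involution, and verify reversibility of $l_{p_s}^-$ directly on product rectangles using that $\theta'=\theta$ and that the Gaussian density depends on $p,p'$ only through the symmetric combination $p+p'-sg(\theta)$. The one imprecision is your appeal to Lemma~\ref{markov_kernel_product_unique_lemma} for the reduction to four-fold rectangles $A_\theta\times A_p$, $B_\theta\times B_p$: that lemma only gives uniqueness from values on products $A\times B$ with $A,B\in\cale$ (here $\cale$ is the Borel $\sigma$-algebra on $\R^{2d}$), so an additional $\pi$-system argument is needed to pass to products of four Borel sets in $\R^d$; the paper supplies exactly this step as Lemma~\ref{markov_kernel_reversible_double_lemma}.
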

\begin{proof}
	The proof is fairly technical, requiring some machinery from measure theory,
  and is deferred to Appendix~\ref{measure_theory_section}.
\end{proof}

\begin{corollary}\label{dp_hmc_leapfrog_reversibility_lemma}
  The Markov kernel \(l_{-}\circ l\) is reversible with respect to the Lebesgue
  measure.
\end{corollary}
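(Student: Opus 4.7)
The plan is to read off the reversibility of $l_{-}\circ l$ directly from its symmetric decomposition, using Lemma~\ref{markov_kernel_reversible_composition_lemma} together with Lemma~\ref{dp_hmc_leapfrog_step_reversibility_lemma}. The key observation is that the decomposition
\[
  l_{-}\circ l = l_{p_{\nicefrac{\eta}{2}}}^{-}\circ l_{\theta}^{-}\circ l_{p_{\eta}}^{-}\circ
  \dotsb \circ l_{p_{\eta}}^{-}\circ l_{\theta}^{-}\circ l_{p_{\nicefrac{\eta}{2}}}^{-}
\]
is a palindrome: the $i$-th factor from the left equals the $i$-th factor from the right, since the outer factors are both $l_{p_{\nicefrac{\eta}{2}}}^{-}$, the next pair are both $l_{\theta}^{-}$, and the middle factors are all $l_{p_{\eta}}^{-}$ paired symmetrically with $l_{\theta}^{-}$.

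First I would write $l_{-}\circ l = q_{k}\circ \dotsb \circ q_{1}$ where each $q_{i}\in \{l_{p_{\nicefrac{\eta}{2}}}^{-}, l_{p_{\eta}}^{-}, l_{\theta}^{-}\}$, and note that by Lemma~\ref{dp_hmc_leapfrog_step_reversibility_lemma} each $q_{i}$ is reversible with respect to the Lebesgue measure $\mu$. Applying Lemma~\ref{markov_kernel_reversible_composition_lemma} directly yields
\[
  \int_{A}\mu(\dx a)\int_{C}(q_{k}\circ \dotsb \circ q_{1})(a, \dx c)
  = \int_{C}\mu(\dx c)\int_{A}(q_{1}\circ \dotsb \circ q_{k})(c, \dx a)
\]
for all measurable $A,C$.

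Next I would use the palindromic symmetry $q_{i}=q_{k+1-i}$ to conclude that the reversed composition $q_{1}\circ \dotsb \circ q_{k}$ equals $q_{k}\circ \dotsb \circ q_{1} = l_{-}\circ l$. Substituting this into the previous identity gives
\[
  \int_{A}\mu(\dx a)\int_{C}(l_{-}\circ l)(a, \dx c)
  = \int_{C}\mu(\dx c)\int_{A}(l_{-}\circ l)(c, \dx a),
\]
which is exactly the definition of reversibility of $l_{-}\circ l$ with respect to the Lebesgue measure.

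There is no real obstacle; the proof is essentially a one-line consequence of the two prior results once the palindromic structure is noted. The only point requiring care is verifying that the decomposition really is symmetric factor-by-factor, which is immediate from the displayed formula for $l_{-}\circ l$ in the excerpt.
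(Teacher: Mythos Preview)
Your proposal is correct and matches the paper's own proof essentially verbatim: the paper also invokes Lemma~\ref{dp_hmc_leapfrog_step_reversibility_lemma} for the individual factors and then applies Lemma~\ref{markov_kernel_reversible_composition_lemma} together with the palindromic symmetry of the decomposition to conclude reversibility of \(l_{-}\circ l\).
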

\begin{proof}
  By Lemma~\ref{dp_hmc_leapfrog_step_reversibility_lemma}, the decomposition
  \(
    l_{-}\circ l = l_{p_{\nicefrac{\eta}{2}}}^{-}\circ l_{\theta}^{-}\circ l_{p_{\eta}}^{-}\circ \dotsb \circ l_{p_{\eta}}^{-}\circ l_{\theta}^{-}\circ l_{p_{\nicefrac{\eta}{2}}}^{-}
  \)
  fulfills the assumptions of Lemma~\ref{markov_kernel_reversible_composition_lemma}.
  As the decomposition is symmetric, Lemma~\ref{markov_kernel_reversible_composition_lemma}
  then implies that \(l_{-}\circ l\) is reversible with respect to the Lebesgue
  measure.
\end{proof}

\begin{theorem}\label{dp_hmc_convergence_theorem}
  For a continuous distribution \(\pi\) that is supported on \(\R^{d}\) and has
  a differentiable log-likelihood, if
  \[
    \alpha_{DP}(\theta, p, \theta', p') =
    \min\left\{1, \exp\left(H(\theta, p) - H(\theta', p') + \xi
        - \frac{1}{2}\sigma_{l}^{2}(\theta, \theta')\right)\right\},
  \]
  where \(\xi \sim \caln(0, \sigma_{l}^{2}(\theta, \theta'))\), is used as the
  acceptance probability of DP-HMC and log-likelihood ratios are not clipped,
  the invariant distribution is \(\pi^{*}(\theta, p)\propto \exp(-H(\theta, p))\).
\end{theorem}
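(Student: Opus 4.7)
The plan is to follow the template of Theorem~\ref{hmc_convergence_theorem}, making two substitutions: (i) the role of Lemma~\ref{dirac_reversible_lemma} is taken by Corollary~\ref{dp_hmc_leapfrog_reversibility_lemma}, which supplies Lebesgue-reversibility of the noisy leapfrog kernel $l_{-}\circ l$; and (ii) the role of Lemma~\ref{mh_reversible_proposal_lemma} is taken by a version adapted to the penalty-corrected acceptance probability $\alpha_{DP}$. Once these two pieces are in place, the standard MH argument closes the proof, with the momentum-resampling step preserving $\pi^{*}$ trivially because the Gaussian marginal of $\pi^{*}$ in $p$ matches the resampling distribution.

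The technical core will be an averaging identity underlying the penalty method:
\[
  E_{\xi}\!\left[\min\{a,\, b\,e^{\xi-\sigma^{2}/2}\}\right]
  = E_{\xi}\!\left[\min\{b,\, a\,e^{\xi-\sigma^{2}/2}\}\right]
\]
for $\xi\sim \caln(0,\sigma^{2})$ and all $a,b>0$. I would prove it by rewriting the left-hand side as $E[e^{\xi-\sigma^{2}/2}\min\{a\,e^{-\xi+\sigma^{2}/2},\,b\}]$, treating $e^{\xi-\sigma^{2}/2}$ as a Radon-Nikodym derivative whose Cameron-Martin shift moves the mean of $\xi$ by $\sigma^{2}$, and then exploiting the $\xi\mapsto -\xi$ symmetry of the Gaussian.

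Applied with $a=\pi^{*}(\theta,p)$, $b=\pi^{*}(\theta',p')$ and the symmetric noise variance $\sigma^{2}=\sigma_{l}^{2}(\theta,\theta')$, this identity yields the pointwise symmetry $\pi^{*}(\theta,p)\,E_{\xi}[\alpha_{DP}(\theta,p,\theta',p')] = \pi^{*}(\theta',p')\,E_{\xi}[\alpha_{DP}(\theta',p',\theta,p)]$. Combined with the Lebesgue-reversibility of $l_{-}\circ l$, the uniqueness of product measures in Lemma~\ref{markov_kernel_product_unique_lemma} then upgrades this to an equality of the two product measures $\nu_{1}$ and $\nu_{2}$ on $(E,\cale)^{2}$, exactly as in the proof of Lemma~\ref{mh_reversible_proposal_lemma}, giving $\pi^{*}$-reversibility of the full accept/reject kernel and hence invariance.

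The main obstacle is the measure-theoretic bookkeeping in the generalised version of Lemma~\ref{mh_reversible_proposal_lemma}: the penalty noise $\xi$ is an auxiliary variable that must be integrated out inside the double integrals defining $\nu_{1}$ and $\nu_{2}$, and threading it through these integrals while simultaneously using the (non-deterministic) Lebesgue-reversibility of the underlying proposal $l_{-}\circ l$ requires some care. Once the averaging identity is in hand and correctly inserted, everything reduces to the non-private argument.
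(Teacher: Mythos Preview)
Your proposal is correct and follows the same high-level architecture as the paper: Lebesgue-reversibility of the noisy leapfrog proposal (Corollary~\ref{dp_hmc_leapfrog_reversibility_lemma}) plus a detailed-balance argument for the penalty-corrected acceptance probability. The difference is one of packaging. The paper's proof is a two-step appeal: first Lemma~\ref{mh_reversible_proposal_lemma} with the \emph{uncorrected} acceptance probability $\alpha$, and then an invocation of the penalty algorithm of \citet{CeD99} as a black box that preserves the invariant distribution when Gaussian noise and the $-\tfrac{1}{2}\sigma_{l}^{2}$ correction are added. You instead fuse these into a single step by proving the averaging identity $E_{\xi}[\min\{a,be^{\xi-\sigma^{2}/2}\}]=E_{\xi}[\min\{b,ae^{\xi-\sigma^{2}/2}\}]$ via a Cameron--Martin change of measure, and then running the detailed-balance computation directly with $\alpha_{DP}$. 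Your route is self-contained and makes explicit exactly why the penalty correction works (and that it needs $\sigma_{l}^{2}(\theta,\theta')=\sigma_{l}^{2}(\theta',\theta)$, which holds here); the paper's route is shorter but defers that content to the cited reference. Both are valid and neither requires machinery the other avoids.
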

\begin{proof}
  By Corollary~\ref{dp_hmc_leapfrog_reversibility_lemma} and
  Lemma~\ref{mh_reversible_proposal_lemma}, using
  \(l_{-}\circ l\) as the proposal of an MH algorithm with
  \[
    \alpha(\theta, p, \theta', p')
    = \min\left\{1, \exp\left(H(\theta, p) - H(\theta', p')\right)\right\}
  \]
  as the acceptance probability makes \(\pi^{*}\) the invariant distribution of
  the algorithm. Applying the DP-penalty algorithm to \(\alpha\) results in the
  acceptance probability
  \[
    \alpha_{DP}(\theta, p, \theta', p') =
    \min\left\{1, \exp\left(H(\theta, p) - H(\theta', p') + \xi
        - \frac{1}{2}\sigma_{l}^{2}(\theta, \theta')\right)\right\},
  \]
  where \(\xi \sim \caln(0, \sigma_{l}^{2}(\theta, \theta'))\), leaving \(\pi^{*}\) as the invariant
  distribution.
\end{proof}

Like the DP-penalty algorithm, Theorem~\ref{dp_hmc_convergence_theorem}
assumes that the log-likelihood ratio is not clipped. This means that convergence
is not guaranteed in the presence of clipping, but in practice, we found that
clipping a small percentage of the log-likelihood ratios does not affect the
resulting posterior, as presented in Section~\ref{clipping_experiment_section}.
Clipping gradients does not affect convergence, but it
likely lowers the acceptance rate, thus reducing the utility of any sample.

\begin{theorem}\label{dp_hmc_ergodic_theorem}
	DP-HMC is strongly irreducible, and thus ergodic.
\end{theorem}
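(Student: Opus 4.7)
The plan is to verify the strong irreducibility condition from Section~\ref{hmc_section} directly: from any starting state $(\theta_0, p_0)$, the DP-HMC proposal distribution admits a density on $\R^{2d}$ that is strictly positive everywhere. By the discussion of strong irreducibility in Section~\ref{hmc_section}, this is sufficient for ergodicity, and the acceptance step contributes no additional obstruction since $\xi \sim \caln(0, \sigma_l^2(\theta, \theta'))$ has full support, making $\alpha_{DP}$ strictly positive in expectation.

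The DP-HMC proposal has two independent sources of randomness. First, the momentum is resampled as $p^{(\text{new})} \sim \caln(0, M)$, whose density is positive on all of $\R^d$. Second, the noisy leapfrog injects independent Gaussian noise $\xi_0, \xi_1, \ldots, \xi_L \sim \caln(0, \sigma_g^2)$ at each of its $L+1$ momentum sub-steps, again with densities positive on $\R^d$. Fixing $(\theta_0, p_0)$, I would study the map $\Phi$ sending $(p^{(\text{new})}, \xi_0, \ldots, \xi_L)$ to $(p^{(\text{new})}, \theta_1, \ldots, \theta_{L-1}, \theta_L, p_L)$, where the intermediate positions and half-step momenta are produced by the affine leapfrog recursions. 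Writing $s_0 = s_L = \eta/2$ and $s_k = \eta$ for $0 < k < L$, this map can be inverted explicitly: first recover $p_{k+1/2} = M(\theta_{k+1} - \theta_k)/\eta$ from the position differences, then read off $\xi_k = s_k^{-1}(p_{k-1/2} - p_{k+1/2}) - g(\theta_k)$, with $p_{-1/2}$ interpreted as $p^{(\text{new})}$. The Jacobian of this inverse is lower triangular with nonzero diagonal entries, so $\Phi$ is an affine diffeomorphism on $\R^{d(L+2)}$.

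A change of variables then produces a strictly positive joint density on the image of $\Phi$, and integrating out $p^{(\text{new})}$ together with the intermediate positions $\theta_1, \ldots, \theta_{L-1}$ using Fubini yields a strictly positive marginal density for the proposal $(\theta', p') = (\theta_L, -p_L)$ at every point of $\R^{2d}$. This verifies strong irreducibility, and ergodicity follows. The main thing to be careful about is the bookkeeping around the half- and full-step indices so that the triangular structure of the inverse map is manifest; positivity of the final density then falls out immediately from the positivity of the Gaussian factors and the nonvanishing Jacobian.
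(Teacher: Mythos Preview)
Your argument is correct in outline, but one word is off: the map $\Phi$ is not \emph{affine}, since the clipped gradient $g$ is not linear in $\theta$. What your explicit inverse actually establishes is that $\Phi$ is a bijection whose Jacobian is block lower-triangular with \emph{constant} invertible diagonal blocks (namely $I$, $-s_k\eta M^{-1}$, and $-(\eta/2)I$), so the determinant is a fixed nonzero constant regardless of the off-diagonal nonlinearities coming from $g$. With that correction the positive-density conclusion goes through; if you want to avoid worrying about differentiability of the clip, the cleanest phrasing is via successive conditional densities: each $\theta_{k+1}$ given $(\theta_0,\dots,\theta_k,p^{(\text{new})})$ is Gaussian with full support, and likewise for $p_L$.

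The paper takes a much shorter route. Rather than tracking all $L+1$ noise injections, it looks only at the final position update $l_\theta$ and the two momentum updates bracketing it. The noise in the penultimate $l_{p_\eta}$ makes the half-step momentum Gaussian with full support, so after $l_\theta$ the position $\theta'$ can land anywhere in $\R^d$; then, conditional on that, the noise in the final $l_{p_{\eta/2}}$ makes $p'$ Gaussian with full support. Two noise variables suffice for strong irreducibility, and the earlier leapfrog steps are irrelevant. Your global change-of-variables argument yields a strictly stronger conclusion---an explicit everywhere-positive density for the entire proposal---but at the cost of the index bookkeeping you flag; the paper's two-step argument is the minimal one needed for the theorem as stated.
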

\begin{proof}
  Consider the last four updates of the leapfrog proposal for \(L > 1\),
  \(l_{-}\circ l_{p_{\nicefrac{\eta}{2}}}\circ l_{\theta}\circ l_{p_{\eta}}\). If \(L = 1\),
  the first of them will be \(l_{p_{\nicefrac{\eta}{2}}}\) instead, which does not affect the
  proof. Denote
  \begin{align*}
    (\theta_{1}, p_{1}) = l_{p_{\eta}}(\theta_{0}, p_{0}),\quad
    (\theta_{2}, p_{2}) = l_{\theta}(\theta_{1}, p_{1}),\quad
    (\theta_{3}, p_{3}) = l_{p_{\nicefrac{\eta}{2}}}(\theta_{2}, p_{2}),\quad
    (\theta_{4}, p_{4}) = l_{-}(\theta_{3}, p_{3}).
  \end{align*}
  Now \(\theta_{2} = \theta_{1} + \eta M^{-1}p_{1}\).
  As \(p_{1} \sim \caln(p_{0} - \eta g(\theta_{0}), \eta^{2}\sigma_{g}^{2})\),
  and as \(M\) is non-singular, \(\eta M^{-1}p_{1}\) has a Gaussian distribution
  with support \(\R^{d}\). As \(\theta_{1} = \theta_{0}\) and
  \(\theta_{4} = \theta_{3} = \theta_{2}\), it is possible to obtain any
  value for \(\theta_{4}\) no matter the starting point \((\theta_{0}, p_{0})\).

  Additionally,
  \(p_{4} = -p_{3} \sim \caln(p_{2} - \frac{\eta}{2}g(\theta_{2}), \frac{\eta^{2}}{4}\sigma_{g}^{2})\),
  so it is possible to obtain any \(p_{4}\) given any \((\theta_{2}, p_{2})\).
  Together, these observations mean that it is possible to obtain any
  value of \((\theta_{4}, p_{4})\) given any starting point \((\theta_{0}, p_{0})\).
  This implies that DP-HMC is strongly irreducible, and thus ergodic~\cite{Robert04}.
\end{proof}

For non-DP HMC, it is standard practice to perturb \(\eta\) between iterations
to help the algorithm escape areas where the leapfrog simulation circles back
near the starting point that may occur if both \(\eta\) and \(L\) are kept
constant~\cite{neal2012mcmc}. As \(\eta\) will be constant during each
leapfrog simulation, this does not affect the invariant distribution of the
algorithm. For DP-HMC, we use a randomised Halton sequence~\cite{Owen17tech} to
perturb \(\eta\) after \citet{HoffmanRS21}, although this may not be
as necessary in DP-HMC as the leapfrog simulation is already noisy.

\begin{algorithm}
  \KwIn{
    likelihood \(p(x\mid \theta)\), prior \(p(\theta)\),
    data \(X\), noise parameters \(\tau_{l}\) and
    \(\tau_{g}\), clip bounds \(b_{l}\) and \(b_{g}\), number of iterations
    \(k\), step size sequence \(\eta_{i}\) for \(1\leq i \leq k\),
    number of leapfrog steps \(L\), positive-definite mass matrix \(M\),
    initial value \(\theta_{0}\).
  }
  \(c_{l}(\theta, \theta') = b_{l}||\theta - \theta'||_{2}\)\;
  \(\sigma_{l}^{2}(\theta, \theta') = 4\tau_{l}^{2}c_{l}^{2}(\theta, \theta')\)\;
  \(\sigma_{g}^{2} = 4b_{g}^{2}\tau_{g}^{2}\)\;
  \For{\(1 \leq i \leq k\)}{
    \(\theta = \theta_{i-1},\quad\) \(\theta' = \theta\)\;
    Sample \(p \sim \caln_{d}(0, M)\) and set \(p' = p\)\;

    \((\theta', p') = l_{p_{\nicefrac{\eta_{i}}{2}}}(\theta', p')\)\;
    \For{\(1 \leq j \leq L - 1\)}{
      \((\theta', p') = l_{\theta}(\theta', p')\)\;
      \((\theta', p') = l_{p_{\eta_{i}}}(\theta', p')\)\;
    }
    \((\theta', p') = l_{\theta}(\theta', p')\)\;
    \((\theta', p') = l_{p_{\nicefrac{\eta_{i}}{2}}}(\theta', p')\)\;

    \(r_{x} = \ln\frac{p(x\mid \theta')}{p(x\mid \theta)}\)\;
    \(R = \sum_{x\in X}\clip_{c_{l}(\theta, \theta')}(r_{x})\)\;
    Sample \(\xi\sim \caln(0, \sigma_{l}^{2}(\theta, \theta'))\)\;
    \(\Delta p = \frac{1}{2}p^{T}M^{-1}p - \frac{1}{2}p'^{T}M^{-1}p'\)\;
    \(\Delta H = R + \Delta p + \ln\frac{p(\theta')}{p(\theta)} + \xi\)\;
    Sample \(u\sim \mathrm{Unif}(0, 1)\)\;
    \eIf{\(\ln u < \Delta H - \frac{1}{2}\sigma_{l}^{2}(\theta, \theta')\)}{
      \(\theta_{i} = \theta'\)\;
    } {
      \(\theta_{i} = \theta\)\;
    }
  }
  \Return \((\theta_{1},\dotsc, \theta_{k})\)\;
  \caption{DP-HMC}
  \label{dp_hmc_algo}
\end{algorithm}

Algorithm~\ref{dp_hmc_algo} presents DP-HMC.
In Algorithm~\ref{dp_hmc_algo}, the gradient \(\nabla U\) is evaluated
\(L + 1\) times per iteration of the outer for-loop, for a total of \(k(L+1)\)
times, and the log-likelihood ratio is evaluated \(k\) times in total.
The privacy cost can then be computed from Theorem~\ref{gauss_composition_theorem}:
\begin{theorem}\label{dp_hmc_privacy_theorem}
	DP-HMC (Algorithm~\ref{dp_hmc_algo}) is \((\epsilon, \delta(\epsilon))\)-ADP for substitute
  neighbourhood for
  \[
    \delta(\epsilon) = \frac{1}{2}\left(
      \erfc\left(\frac{\epsilon - \mu}{2\sqrt{\mu}}\right)
      -e^{\epsilon}\erfc\left(\frac{\epsilon + \mu}{2\sqrt{\mu}}\right)
    \right), \quad \textrm{where} \quad \mu = \frac{k}{2\tau_{l}^{2}} + \frac{k(L + 1)}{2\tau_{g}^{2}}.
  \]
\end{theorem}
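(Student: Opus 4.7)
The plan is to decompose the algorithm into its private releases, bound the sensitivity and PLD of each, and then compose using Theorem~\ref{gauss_composition_theorem}, invoking post-processing immunity for the remaining steps. The only operations in Algorithm~\ref{dp_hmc_algo} that actually touch the private data $X$ are (i) the $L+1$ clipped, Gaussian-perturbed gradient evaluations $g(\theta)$ inside each leapfrog step (these feed into $l_{p_{\eta_i/2}}$ and $l_{p_{\eta_i}}$), and (ii) the single noisy clipped log-likelihood ratio $R+\xi$ used in the acceptance test. Everything else — the momentum draws, the deterministic $l_\theta$ step, the uniform draw $u$, the Halton-perturbed step size, and the accept/reject comparison — is either data-independent or post-processing of these noisy quantities.

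Next I would compute the sensitivities and per-query PLD parameters. For the gradient, per-example summands $\clip_{b_g}(\nabla\ln p(x\mid\theta))$ have norm at most $b_g$, so under the substitute neighbourhood $\Delta_2 g \le 2b_g$; with noise variance $\sigma_g^2 = 4b_g^2\tau_g^2$, this gives $\mu_g = \Delta_2g^2/(2\sigma_g^2) = 1/(2\tau_g^2)$. For the log-likelihood ratio, each clipped summand lies in $[-c_l(\theta,\theta'),c_l(\theta,\theta')]$, so $\Delta_2 R \le 2c_l(\theta,\theta')$; with noise variance $\sigma_l^2(\theta,\theta') = 4\tau_l^2 c_l^2(\theta,\theta')$, this likewise yields $\mu_l = 1/(2\tau_l^2)$, independent of the (data-dependent) scale $c_l$. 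I would then count occurrences: $k(L+1)$ gradient releases and $k$ log-likelihood releases across the outer loop, for a total PLD parameter $\mu = k(L+1)/(2\tau_g^2) + k/(2\tau_l^2)$.

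Finally, I would apply Theorem~\ref{gauss_composition_theorem} to the composition of all $k(L+2)$ Gaussian mechanisms, which immediately yields the stated $\delta(\epsilon)$ formula, and then appeal to post-processing to conclude that the returned chain $(\theta_1,\dotsc,\theta_k)$ — a deterministic function of the noisy gradients, the noisy log-likelihood ratios, and the data-independent randomness — inherits the same $(\epsilon,\delta(\epsilon))$-ADP guarantee.

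The main obstacle is the adaptive nature of the composition: the sensitivity scale $c_l(\theta,\theta')$ of the log-likelihood query and the query points $\theta$ at which later gradients are evaluated both depend on earlier noisy releases. I would handle this by noting that Theorem~\ref{gauss_composition_theorem} is derived via PLDs, which compose under adaptive composition provided each per-round PLD is dominated by $\mathcal{N}(\mu_i, 2\mu_i)$ conditional on the past; since the per-round $\mu_i$ above depends only on the fixed constants $\tau_g,\tau_l$ (the $c_l$ factor cancels because the noise scale is chosen proportional to the sensitivity), the conditional per-round PLD is exactly $\mathcal{N}(\mu_i,2\mu_i)$, so the convolution argument cited from \citet{Sommer2019} applies unchanged.
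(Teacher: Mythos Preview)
Your proposal is correct and follows essentially the same approach as the paper's proof: compute the sensitivities of the clipped gradient ($2b_g$) and clipped log-likelihood ratio ($2c_l(\theta,\theta')$), derive the per-query PLD parameters $\mu_g = 1/(2\tau_g^2)$ and $\mu_l = 1/(2\tau_l^2)$, count the $k(L{+}1)$ gradient and $k$ log-likelihood releases, and apply Theorem~\ref{gauss_composition_theorem}. Your treatment is in fact more careful than the paper's terse proof, which does not explicitly invoke post-processing immunity or discuss the adaptive-composition subtlety you flag in your final paragraph.
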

\begin{proof}
  The sensitivity of the log-likelihood ratio is \(2b_{l}||\theta - \theta'||_{2}\)
  and the sensitivity of the gradient is \(2b_{g}\). Thus, 
  adding noise with
  variance \(\sigma_{l}^{2}(\theta, \theta')\) to the log-likelihood ratio
  gives a sensitivity-variance ratio \(\mu_{l} = \frac{1}{2\tau_{l}^{2}}\). Adding noise with variance
  \(\sigma_{g}^{2}\) to the gradients has sensitivity-variance ratio \(\mu_{g} = \frac{1}{2\tau_{l}^{2}}\).
  As the log-likelihood ratio is evaluated \(k\) times and the gradients are evaluated
  \(k(L + 1)\) times, the total \(\mu\) in Theorem~\ref{gauss_composition_theorem}
  is
  \[
    \mu = k\mu_{l} + k(L+1)\mu_{g} = \frac{k}{2\tau_{l}^{2}} + \frac{k(L + 1)}{2\tau_{g}^{2}}.\qedhere
  \]
\end{proof}

It is possible to shave off one gradient evaluation per outer for-loop iteration of
Algorithm~\ref{dp_hmc_algo}, except the first one, by observing that the first gradient
evaluation computed during an outer for-loop iteration is the same gradient as either the
first for rejected proposals, or the last for accepted proposals, gradient evaluation of the
previous iteration. However, this causes the current iteration to depend on the noise
value generated for that gradient evaluation during the previous iteration, so it is
not clear whether the resulting chain is Markov. As the potential privacy cost saving from this
optimisation is small, we did not investigate this further.

\section{Experiments}\label{experiments_section}

We ran comparisons on two synthetic posterior distributions, presented in
Section~\ref{comparison_section}: a 10-dimensional correlated
Gaussian model and a banana distribution model that results in a non-convex
banana shaped posterior. We also experimented with the effect of clipping
log-likelihood ratios, presented in Section~\ref{clipping_experiment_section}.
The code for the experiment is publicly available.\footnote{
  \url{https://github.com/DPBayes/DP-HMC-experiments}
}

\paragraph{Gaussian Model}
The Gaussian is a 10-dimensional model where the prior and likelihood
for parameters \(\theta\in \R^{d}\) and \(X\in \R^{n\times d}\) are given by
\(\theta \sim \caln_{d}(\mu_{0}, \sigma_{0}^{2}I)\) and \(x \sim \caln_{d}(\theta, \Sigma)\),
where \(\mu_{0}\) and \(\sigma_{0}\) are the prior hyperparameters, and \(\Sigma\)
is the known variance.
As the prior is a Gaussian distribution, the posterior is also a Gaussian
with known analytical form~\cite{BDA}. We used \(d = 10\), \(n = 100000\),
\(\mu_{0} = 0\), \(\sigma_{0} = 100\). \(\Sigma\) was chosen after \citet{HoffmanRS21} by
sampling its eigenvalues from a gamma distribution with shape parameter \(0.5\)
and scale parameter 1, and sampling the eigenvectors by orthonormalising the columns of a
random matrix with each entry sampled from the uniform distribution on \([0, 1]\).

\paragraph{Banana Model}
The banana distribution~\cite{TPK14} is a probability distribution in the shape
of a banana that is a challenging target for MCMC algorithms due to its
non-convex and thin shape. The distribution is a transformation of the
2-dimensional Gaussian distribution using the function
\(g(x_{1}, x_{2}) = (x_{1}, x_{2} - a x_{1}^{2})\). If \(x\sim \caln_{2}(\mu, \Sigma)\),
\(g(x)\) has the banana distribution denoted by \(\ban(\mu, \Sigma, a)\).
To test DP algorithms, we need a Bayesian inference problem where the posterior
is a banana distribution. This is given by a transformation of the Gaussian model:
\begin{align*}
	\theta = (\theta_{1}, \theta_{2}) \sim \ban(0, \sigma_{0}^{2}I, a), \quad
  x_{1} \sim \caln(\theta_{1}, \sigma_{1}^{2}), \quad
  x_{2} \sim \caln(\theta_{2} + a\theta_{1}^{2}, \sigma_{2}^{2}).
\end{align*}
The posterior of this model for data \(X\in \R^{n\times 2}\) is
\(\ban(\mu, \Sigma, a)\), where,
denoting \(\tau_{i} = \frac{1}{\sigma_{i}^{2}}\) and
\(\bar{x}_{i} = \frac{1}{n}\sum_{j=1}^{n}X_{ji}\),
\[
  \mu = \left(
    \frac{n\tau_{1}\bar{x}_{1}}{n\tau_{1} + \tau_{0}},
    \frac{n\tau_{2}\bar{x}_{2}}{n\tau_{2} + \tau_{0}}
  \right),
  \quad
  \Sigma = \diag\left(
    \frac{1}{n\tau_{1} + \tau_{0}},
    \frac{1}{n\tau_{2} + \tau_{0}}
  \right).
\]
We used the hyperparameter values \(\sigma_{0} = 1000\), \(\sigma_{1}^{2} = 2000\),
\(\sigma_{2}^{2} = 2500\) and \(a = 20\), \(n = 100000\) and true parameter
values \(\theta_{1} = 0\), \(\theta_{2} = 3\).

\paragraph{Evaluation}
Our main evaluation metric is
\emph{maximum mean discrepancy} (MMD)~\cite{GrettonBRSS12}, which measures the distances
between distributions, and can be estimated from a sample of both distributions.
We used a Gaussian kernel, and chose the kernel width by choosing a 500 point
subsample from both samples with replacement, and used the median between the
distances of both subsamples. Additionally, we plot the distance of the
mean of the chain and the true posterior sample mean as a more interpretable
evaluation metric.

\subsection{Comparison of DP-MCMC Algorithms}\label{comparison_section}

\paragraph{Detailed implementation}
We compare DP-HMC with DP-penalty~\cite{YildirimE19}, DP-SGLD~\cite{WFS15, LCL19}
and DP-SGNHT~\cite{WFS15, DFB14}.
For both models, we ran all algorithms with 4 chains, started from separate starting points.
The starting points were chosen by sampling a point from a Gaussian distribution centered on
the true parameter values, with standard deviation equal to the mean of the
componentwise standard deviations of the reference posterior sample. Each run
was repeated 10 times with different with different starting points, but
each algorithm and value of \(\epsilon\) had the same set of starting points.
Algorithm parameters were tuned by manually examining diagnostics from preliminary runs.

Our method of picking starting points close to the area of high probability favors
DP-penalty, as the gradient-based methods can use the gradient to quickly find
the area of high probability, even when starting far from it. On the other hand,
it simulates the effect of finding the rough location of the posterior
through another method, such as a MAP estimate or variational inference,
with a very small privacy budget, which \citet{HeikkilaJDH19} used in their experiments.

We combined the samples from all 4 chains, discarded the first half as
warmup samples, and
compared them against 1000 i.i.d. samples from the true posterior.
For privacy accounting, we used Theorem~\ref{gauss_composition_theorem} and
Theorem~\ref{dp_hmc_privacy_theorem} for DP-penalty and DP-HMC, respectively.
For DP-SGLD and DP-SGNHT, we used the Fourier accountant\footnote{
  We used the original implementation from \url{https://github.com/DPBayes/PLD-Accountant}.
} of \citet{KJH20} that computes tight privacy bounds for the subsampled Gaussian
mechanism. We used \(\delta = \frac{0.1}{n}\) for all runs, and varied \(\epsilon\).
We used a constant step size for DP-SGLD and DP-SGNHT as computing
privacy bounds for decreasing step size is infeasible with the Fourier accountant.

Log-likelihood ratio clip bounds for DP-penalty and
DP-HMC were tuned to have less than 20\% of the log-likelihood ratios clipped,
as the clipping experiment in Section~\ref{clipping_experiment_section} shows
that it leads to minimal effect on the posterior. We used the same guideline
for gradient clipping in DP-SGLD and DP-SGNHT, but did not experimentally verify the effect of
clipping for them. Gradient clipping for DP-HMC does not affect
asymptotic convergence, so it was tuned to minimise the effect of gradient
clipping and noise on the acceptance rate.

\paragraph{Results}
The top row of Figure~\ref{comparison_figure} shows the result of running each algorithm
on the banana model. DP-HMC and DP-penalty have roughly equal performance on
both MMD and mean error, while DP-SGLD and DP-SGNHT have significantly worse performance,
especially with the higher values of \(\epsilon\).
The bottom row shows the results with the Gaussian model.
The best performer was DP-SGLD, DP-HMC and DP-SGNHT were mostly equal, and
DP-penalty performed the worst.

Figure~\ref{dist_comparison_figure}
compares the posteriors from each algorithm with \(\epsilon = 15\) to the
true posterior on the banana model. The comparison shows the reason for the poor performance
of DP-SGLD and DP-SGNHT: they have trouble exploring the long tail of the
posterior. The median sample of DP-penalty highlights one of the difficulties in sampling
the banana model: the sample seems to cover the posterior well from the 2D plot,
but the marginal plots reveal that it overrepresents the tail, which is likely
a result of one of the chains getting stuck in the tail. DP-HMC is more consistent
in this regard.

\begin{figure}
	\centering
  \includegraphics[width=\linewidth]{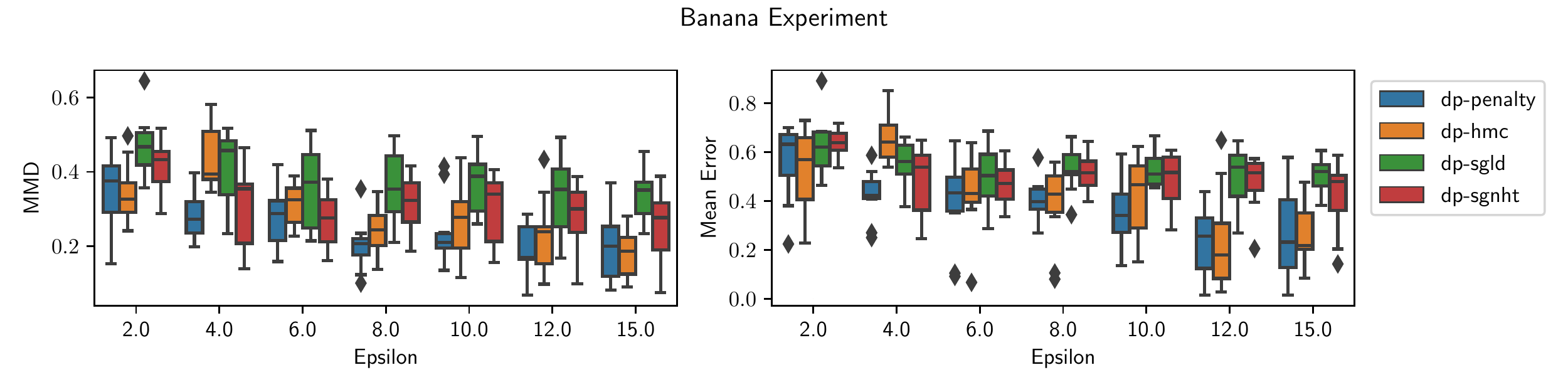}
  \includegraphics[width=\linewidth]{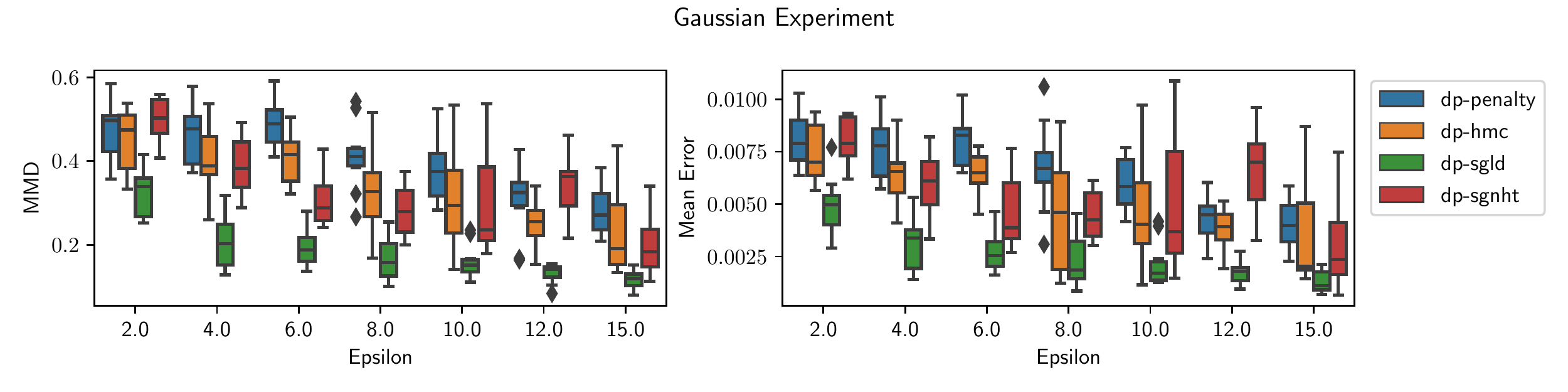}
  \caption{
    MMD and mean error for the banana and Gaussian models.
  }
  \label{comparison_figure}
\end{figure}

\begin{figure}
	\centering
  \includegraphics[width=\linewidth]{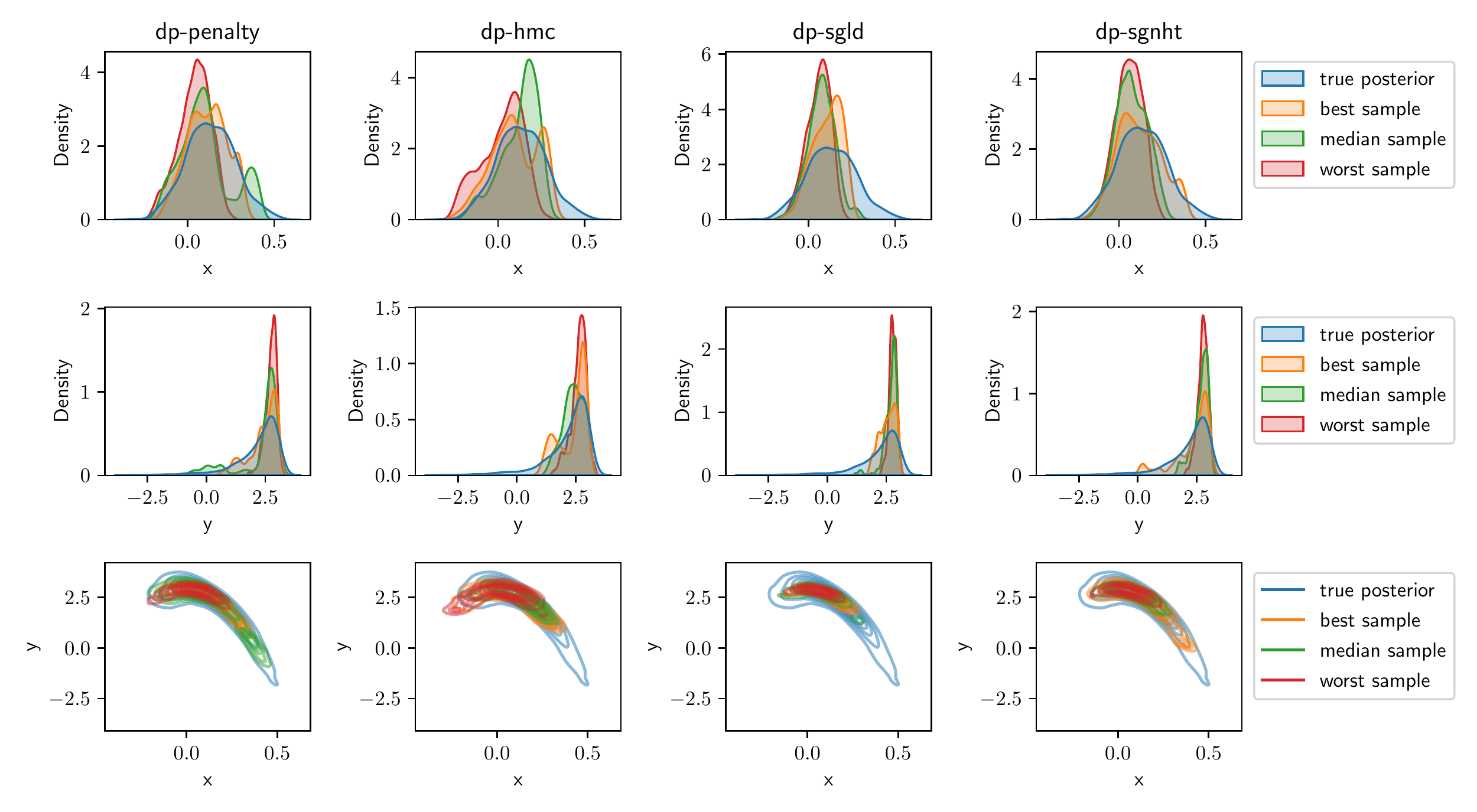}
  \caption{
    Visual comparison of the posteriors from each algorithm and the
    true posterior with \(\epsilon = 15\). The top and middle rows
    show KDE plots of the marginal distributions for the best, median and worst samples
    by MMD compared to the reference posterior sample. The bottom row shows a 2D KDE
    of each sample compared to the reference posterior sample.
  }
  \label{dist_comparison_figure}
\end{figure}

\subsection{Clipping Experiment}\label{clipping_experiment_section}

\paragraph{Implementation}
To asses the effect of clipping log-likelihood ratios, we ran random walk
Metropolis-Hastings (RWMH) and HMC on both the banana and Gaussian models
while clipping log-likelihood ratios. We did not add noise at any point,
and used a large gradient clip bound for HMC\footnote{
  We used a large gradient clip bound, as the leapfrog
  proposal sometimes diverges in the tails of the banana distribution, and
  doing some gradient clipping helps to mitigate the divergence.
}, to isolate the effect of log-likelihood ratio clipping.
We chose the number of iterations and parameters for the algorithms by
ensuring that they converge in the sense that \(\hat{R} < 1.05\)~\cite{BDA},
and ran the algorithms with varying clip bounds.
Otherwise, we used the same setup as with the main experiments in
Section~\ref{experiments_section}: we ran 4 chains for each triple of
clip bound, algorithm and model, and computed the MMD of the combined sample
from all chains, with the first half of each chain discarded.
Each run was repeated 10 times, with the same starting points as the main experiment.

\paragraph{Results}
Figure~\ref{clip_figure} shows the results of the clipping experiment.
With a large enough clip bound, there is very little effect on the MMD, as seen
on the left side panels. The right side panels show MMD as a function of the
fraction of log-likelihood ratios that were clipped, which shows that clipping
has very little effect when less than 20\% of the log-likelihood ratios are clipped,
which we used as our guideline for tuning the clip bounds for our main experiments.
Not all of runs converged, especially with the smaller clip bounds, as we only
set the parameters for the largest clip bound.

Based on the 20\% guideline, for the experiments of Section~\ref{experiments_section},
we set the clip bounds 0.1 and 6 for DP-HMC on the
banana and Gaussian, respectively, and 0.15 and 10 for DP-penalty on the banana
and Gaussian, respectively. Based on Figure~\ref{clip_figure}, there should be
minimal effect from clipping at those bounds.

There is an interesting contrast in the results for the banana and Gaussian
models in Figure~\ref{clip_figure}. On the Gaussian model, there is a gap
in the fractions of clipped log-likelihood ratios between 0.2 and 0.6, while
it is not present with the banana. This could be a result of fact that the Gaussian
experiment does not have any clip bounds between 1 and 5, which is a fairly
large jump.

\begin{figure}
	\centering
  \includegraphics[width=\linewidth]{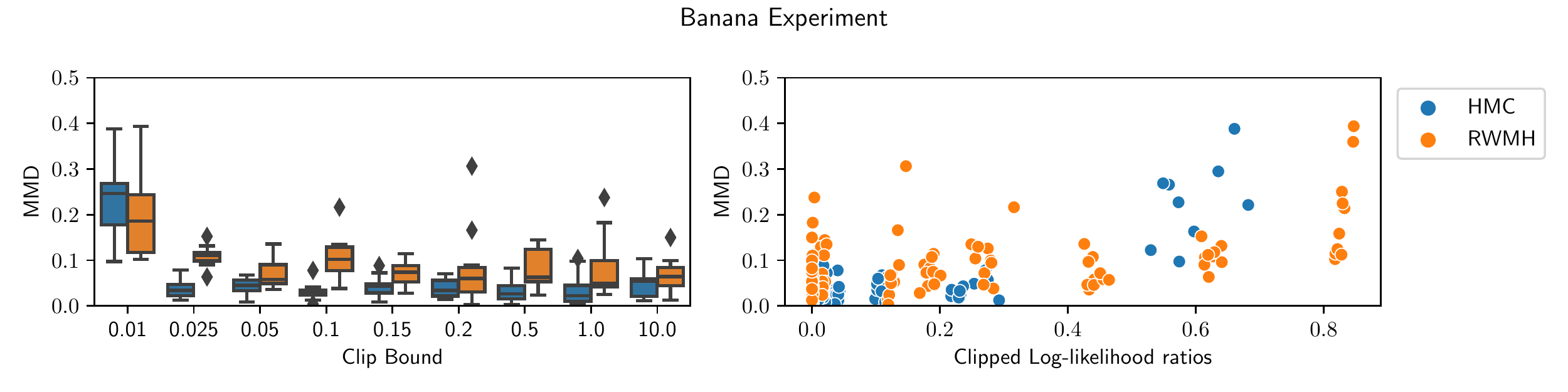}
  \includegraphics[width=\linewidth]{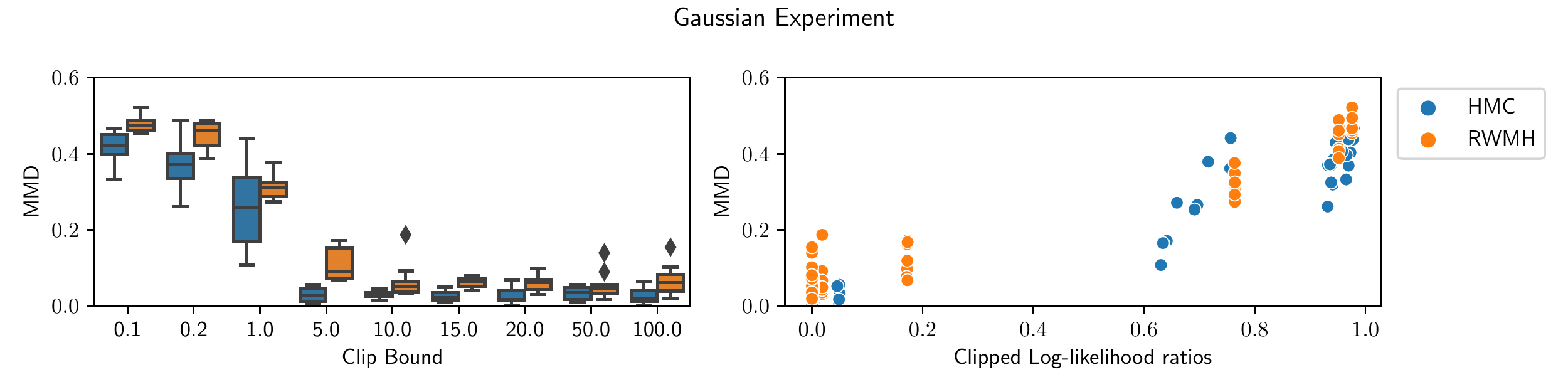}
  \caption{Results for the clipping experiment. The left side shows MMD as a
    function of clip bound, while the right side shows MMD as a function of
    the fraction of log-likelihood ratios that were clipped. Each point represents
    one of the 10 runs for each triple of clip bound, algorithm and model.
  }
  \label{clip_figure}
\end{figure}

\section{Discussion}\label{discussion_section}
\paragraph{Limitations}
Our experiment in Section~\ref{experiments_section} show that the MH acceptance
test is useful for efficient exploration of the tails of the banana posterior.
However, we had to use very large values of \(\epsilon\) to make any progress
towards sampling from the entire posterior, so it is clear that DP-MCMC methods
cannot achieve comparable performance to non-DP methods, unless very loose
privacy bounds are used, or a very large dataset \((n \gg 10^{5})\) is used.
We also noticed that the acceptance rates for DP-penalty and DP-HMC drop rapidly
with increasing dimension on the Gaussian model, which is why we used a fairly
low number of dimensions (\(d = 10\)). This is a major limitation that warrants
further investigation.

Another major limitation of our work is our reliance on the Gaussian mechanism,
which is likely vulnerable to floating point inaccuracies in computer implementations
that destroy the theoretical privacy guarantees~\cite{Mir12}. The discrete Gaussian
mechanism~\cite{C0S20} can be used in place of the Gaussian mechanism for many
applications, but the penalty algorithm requires adding Gaussian noise, so
the discrete Gaussian cannot be used as a drop-in replacement.

\paragraph{Future research}
There are many potential improvements to DP-HMC. Subsampling the gradients, as is done in DP-SGLD and
DP-SGNHT, would provide a significant privacy budget saving. However, naive gradient
subsampling is likely to lower acceptance rates significantly, especially in high dimensions~\cite{Bet15}.
The SGHMC~\cite{CFG14} and SGNHT~\cite{DFB14} algorithms correct for gradient
subsampling by adding friction to the Hamiltonian dynamics, but they forego the MH
acceptance test. Conducting the MH test with the added friction is not trivial,
but it has been done for SGHMC in the AMAGOLD algorithm~\cite{ZhangCS20}.

\citet{HeikkilaJDH19} used subsampling in the acceptance test of their
DP MCMC algorithm by assuming that the error from subsampling is close to
Gaussian by the central limit theorem. The same justification could be applied to
the penalty algorithm, but in our preliminary experiments it substantially lowered
the acceptance rate and did not improve the overall results.

Other potential improvements for DP-HMC are tuning the parameters,
especially \(\eta\) and \(L\), automatically. NUTS~\cite{HoG14} is the most
famous HMC variant that tunes \(\eta\) and \(L\) automatically, but it has
a very complicated sampling process. The recent ChEES-HMC algorithm~\cite{HoffmanRS21}
has a much simpler automatic tuning process, making it more suitable for integration
into DP-HMC.

\section{Conclusion}\label{conclusion_section}
We developed DP-HMC, a DP variant of HMC, and proved that it has the correct
invariant distribution and is ergodic in Section~\ref{dp_hmc_section}.
In Section~\ref{experiments_section}, we compared DP-HMC with existing DP-MCMC
algorithms, and showed that DP-HMC is consistently better or equal to DP-penalty,
while DP-SGLD and DP-SGNHT did not perform consistently.

\section*{Acknowledgements}
We would like to thank Eero Saksman for his thoughts on DP-HMC which
inspired our measure-theoretic convergence proof.
This work has been supported by the Academy of Finland (Finnish Center for
Artificial Intelligence FCAI and grant 325573) as well as by the Strategic
Research Council at the Academy of Finland (grant 336032).

\bibliography{references}

\appendix
\renewcommand{\appendixpagename}{
  Suppelementary Material for Differentially Private Hamiltonian Monte Carlo
}
\appendixpage

\section{Measure Theory}\label{measure_theory_section}

In this section, we prove the measure-theoretic results stated in the main text
but not proved there. We start by recalling the main definitions
of Section~\ref{hmc_convergence_section}:
\markovkerneldefinition*
\markovkernelreversible*

\begin{lemma}\label{markov_kernel_integral_lemma}
	Let \(q_{1}\) and \(q_{2}\) be Markov kernels on \((E, \cale)\), let
  \(\mu\) be a \(\sigma\)-finite measure, and let \(f\colon E\times E \to \R_{+}\)
  be a measurable function. If
  \[
    \int_{A}\mu(\dx a)\int_{B}q_{1}(a, \dx b)
    = \int_{B}\mu(\dx b)\int_{A}q_{2}(b, \dx a),
  \]
  for all \(A, B\in \cale\),
  then
  \[
    \int_{A}\mu(\dx a)\int_{B}q_{1}(a, \dx b)f(a, b)
    = \int_{B}\mu(\dx b)\int_{A}q_{2}(b, \dx a)f(a, b)
  \]
  for all \(A, B\in \cale\).
\end{lemma}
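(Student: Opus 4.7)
The plan is to lift the set-level identity in the hypothesis to an equality of two measures on the product space $(E, \cale)^{2}$, and then extend from indicator functions to arbitrary nonnegative measurable $f$ by the standard monotone-class argument.

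First, I would invoke Lemma~\ref{markov_kernel_product_unique_lemma} twice. Applied to $\mu$ and $q_{1}$, it produces a unique $\sigma$-finite measure $\nu_{1}$ on $(E, \cale)^{2}$ with
\[
  \nu_{1}(A \times B) = \int_{A}\mu(\dx a)\int_{B}q_{1}(a, \dx b)
\]
for every $A, B \in \cale$. Applied to $\mu$ and $q_{2}$, but with the roles of the two arguments swapped (so we use $\nu_{2}(A \times B) = \int_{B} \mu(\dx b) \int_{A} q_{2}(b, \dx a)$), it produces a second unique $\sigma$-finite measure $\nu_{2}$ on $(E, \cale)^{2}$. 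The hypothesis of the lemma then reads $\nu_{1}(A \times B) = \nu_{2}(A \times B)$ for all measurable rectangles. Since rectangles form a $\pi$-system generating $\cale \otimes \cale$ and both $\nu_{1}, \nu_{2}$ are $\sigma$-finite, Dynkin's $\pi$-$\lambda$ theorem gives $\nu_{1} = \nu_{2}$ on all of $\cale \otimes \cale$.

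Next, for any $A, B \in \cale$ and any nonnegative measurable $f\colon E \times E \to \R_{+}$, I claim
\[
  \int_{A}\mu(\dx a)\int_{B}q_{1}(a, \dx b) f(a,b) = \int_{A \times B} f \, \dx \nu_{1}, \qquad \int_{B}\mu(\dx b)\int_{A}q_{2}(b, \dx a) f(a, b) = \int_{A \times B} f \, \dx \nu_{2}.
\]
For $f = 1_{C \times D}$ with $C, D \in \cale$ both sides reduce (after intersecting with $A \times B$) to the defining rectangle identity, so the claim holds for indicators of rectangles, hence for indicators of all measurable sets by another application of Dynkin's theorem, hence by linearity for nonnegative simple functions, and finally for all nonnegative measurable $f$ by monotone convergence. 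Combining this with $\nu_{1} = \nu_{2}$ from the previous paragraph yields the desired equality.

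The only mildly subtle step is the Fubini-style identification of the iterated integral with integration against the product-of-a-measure-and-a-kernel $\nu_{i}$; this is exactly the measure-theoretic folklore underlying Lemma~\ref{markov_kernel_product_unique_lemma}, treated in \citet[Section I.6]{Cin11}, and is the place where $\sigma$-finiteness of $\mu$ is used to legitimately swap the order of integration at the simple-function stage. Everything else is routine: rectangle agreement $\Rightarrow$ $\pi$-system agreement $\Rightarrow$ agreement on the generated $\sigma$-algebra $\Rightarrow$ equality of integrals of nonnegative measurable functions.
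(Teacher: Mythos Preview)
Your proof is correct and follows essentially the same route as the paper's: both recognise that the hypothesis asserts equality of the two $\sigma$-finite measures $\nu_{1}, \nu_{2}$ on $(E,\cale)^{2}$ furnished by Lemma~\ref{markov_kernel_product_unique_lemma}, and then pass to integrals of $f$ against these measures. The paper's version is terser---it invokes the uniqueness clause of Lemma~\ref{markov_kernel_product_unique_lemma} directly rather than spelling out the $\pi$--$\lambda$ argument, and it takes the Fubini/Tonelli identification of the iterated integrals with $\int f\,\dx\nu_{i}$ for granted---but the substance is the same.
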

\begin{proof}
  The condition
  \[
    \int_{A}\mu(\dx a)\int_{B}q_{1}(a, \dx b)
    = \int_{B}\mu(\dx b)\int_{A}q_{2}(b, \dx a)
  \]
  means that the measures (as in Lemma~\ref{markov_kernel_product_unique_lemma})
  \[
    \nu_{1}(A\times B) = \int_{A}\mu(\dx a)\int_{B}q(a, \dx b),
  \]
  \[
    \nu_{2}(A\times B) = \int_{B}\mu(\dx b)\int_{A}q(b, \dx a)
  \]
  are equal. Then
  \begin{align*}
    \int_{A}\mu(\dx a)\int_{B}q_{1}(a, \dx b)f(a, b)
    &= \int_{A\times B}\nu_{1}(\dx a, \dx b)f(a, b)
    \\&= \int_{A\times B}\nu_{2}(\dx a, \dx b)f(a, b)
    \\&= \int_{B}\mu(\dx b)\int_{A}q_{2}(b, \dx a)f(a, b)
  \end{align*}
  for all \(A, B\in \cale\).
\end{proof}
\begin{corollary}\label{markov_kernel_reversible_integral_corollary}
  Let \(q\) be a Markov kernel on \((E, \cale)\) reversible with respect to
  a \(\sigma\)-finite measure \(\mu\). Then
  \[
    \int_{A}\mu(\dx a)\int_{B}q(a, \dx b)f(a, b)
    = \int_{B}\mu(\dx b)\int_{A}q(b, \dx a)f(a, b)
  \]
  for all \(A, B\in \cale\).
\end{corollary}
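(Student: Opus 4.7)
The plan is to deduce the corollary directly from Lemma~\ref{markov_kernel_integral_lemma} by specialising both kernels to the same reversible kernel $q$. The hypothesis of Lemma~\ref{markov_kernel_integral_lemma} is precisely an equality of the form
\[
  \int_{A}\mu(\dx a)\int_{B}q_{1}(a, \dx b) = \int_{B}\mu(\dx b)\int_{A}q_{2}(b, \dx a),
\]
so I would begin by observing that taking $q_{1} = q_{2} = q$ turns this hypothesis into exactly the reversibility condition in Definition~\ref{markov_kernel_reversible_definition}, which holds by assumption.

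With that observation in place, the remaining step is a direct citation: apply Lemma~\ref{markov_kernel_integral_lemma} to any measurable $f \colon E \times E \to \R_{+}$ and any $A, B \in \cale$, yielding the desired identity
\[
  \int_{A}\mu(\dx a)\int_{B}q(a, \dx b)f(a, b)
  = \int_{B}\mu(\dx b)\int_{A}q(b, \dx a)f(a, b).
\]
There is no genuine obstacle here, since all the measure-theoretic work (the passage from equality of set-indexed integrals to equality of integrals against a general non-negative measurable function, via Lemma~\ref{markov_kernel_product_unique_lemma} and the standard construction of product-kernel measures) has already been carried out in the proof of Lemma~\ref{markov_kernel_integral_lemma}. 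The corollary is a one-line specialisation, and I would present it as such rather than re-deriving anything.
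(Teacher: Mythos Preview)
Your proposal is correct and matches the paper's own proof essentially verbatim: the paper also sets \(q_{1} = q_{2} = q\) in Lemma~\ref{markov_kernel_integral_lemma} and notes that its hypothesis then becomes the reversibility of \(q\) with respect to \(\mu\).
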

\begin{proof}
	The claim follows by setting \(q_{1} = q_{2} = q\) in
  Lemma~\ref{markov_kernel_integral_lemma}, as the condition of
  Lemma~\ref{markov_kernel_integral_lemma} is then the reversibility of
  \(q\) with respect to \(\mu\).
\end{proof}

\markovkernelreversiblecomposition*
\begin{proof}
  We prove the claim by induction on \(k\). For \(k = 1\), the claim is the
  definition of reversibility of \(q_{1}\) with respect to \(\mu\).
  If the claim holds for \(k - 1\), for any \(A, C\in \cale\),
  \begin{align}
    \int_{A}\mu(\dx a)\int_{C}(q_{k}\circ \dotsb \circ q_{1})(a, \dx c)
    &= \int_{A}\mu(\dx a)\int_{E}(q_{k-1}\circ \dotsb \circ q_{1})(a, \dx b)\int_{C}q_{k}(b, \dx c)
      \label{mkrc_line1}
    \\&= \int_{E}\mu(\dx b)\int_{A}(q_{1}\circ \dotsb \circ q_{k-1})(b, \dx a)\int_{C}q_{k}(b, \dx c)
      \label{mkrc_line2}
    \\&= \int_{E}\mu(\dx b)\int_{C}q_{k}(b, \dx c)\int_{A}(q_{1}\circ \dotsb \circ q_{k-1})(b, \dx a)
      \label{mkrc_line3}
    \\&= \int_{C}\mu(\dx c)\int_{E}q_{k}(c, \dx b)\int_{A}(q_{1}\circ \dotsb \circ q_{k-1})(b, \dx a)
      \label{mkrc_line4}
    \\&= \int_{C}\mu(\dx c)\int_{A}(q_{1}\circ \dotsb \circ q_{k})(c, \dx a).
      \label{mkrc_line5}
  \end{align}
  Lines~(\ref{mkrc_line1}) and (\ref{mkrc_line5}) follow from
  Lemma~\ref{markov_kernel_composition_lemma},
  line~(\ref{mkrc_line2}) from the induction hypothesis and
  Lemma~\ref{markov_kernel_integral_lemma},
  and line (\ref{mkrc_line4}) from
  Corollary~\ref{markov_kernel_reversible_integral_corollary}.
\end{proof}

\mhreversibleproposal*
\begin{proof}
  For acceptance probability \(\alpha\), the detailed balance condition
  \[
    \int_{A}\pi(\dx \theta)\int_{B}q(\theta, \dx \theta')\alpha(\theta, \theta')
    = \int_{B}\pi(\dx \theta')\int_{A}q(\theta', \dx \theta)\alpha(\theta', \theta)
  \]
  for all measurable \(A, B\subset \R^{d}\) implies the invariance of \(\pi\)~\cite{Tie98}.\footnote{
    \citet{Tie98} states the detailed balance condition as an equality of
    measures, which is equivalent to the stated equality of integrals by
    Lemma~\ref{markov_kernel_product_unique_lemma}.
  }
  If \(\pi\) is continuous and \(q\) is reversible with respect to the Lebesgue
  measure \(m\), for measurable \(A, B\subset \R^{d}\):
  \begin{align*}
    \int_{A}\pi(\dx \theta)\int_{B}q(\theta, \dx \theta')\alpha(\theta, \theta')
    &= \int_{A}m(\dx \theta)\int_{B}\pi(\theta)q(\theta, \dx \theta')\alpha(\theta, \theta')
    \\&= \int_{A}m(\dx \theta)\int_{B}q(\theta, \dx \theta')\min\{\pi(\theta), \pi(\theta')\}
    \\&= \int_{B}m(\dx \theta')\int_{A}q(\theta', \dx \theta)\min\{\pi(\theta'), \pi(\theta)\}
    \\&= \int_{B}\pi(\dx \theta')\int_{A}q(\theta', \dx \theta)\alpha(\theta', \theta),
  \end{align*}
  which implies the invariance of \(\pi\).
\end{proof}

\diracreversible*
\begin{proof}
  As \(f = f^{-1}\) and preserves Lebesgue measure, for all measurable \(A, B\subset \R^{d}\):
  \begin{align*}
    \int_{A}m(\dx a)\int_{B}\delta_{f(a)}(\dx b)
    &= \int_{A}m(\dx a)1_{B}(f(a))
    \\&= m(A\cap f^{-1}(B))
    \\&= m(f^{-1}(A\cap f^{-1}(B)))
    \\&= m(f^{-1}(A)\cap B)
    \\&= \int_{B}m(\dx b)\int_{A}\delta_{f(b)}(\dx a). \qedhere
  \end{align*}
\end{proof}

For the convergence proof of DP-HMC, specifically
Lemma~\ref{dp_hmc_leapfrog_step_reversibility_lemma}, we must deal with Markov kernels defined
on \(\R^{2d}\) that have the auxiliary variable \(p\) in addition to the
parameter \(\theta\). The preceding theory cannot deal with both variables
separately, so we must develop theory that can, which culminates in
Lemma~\ref{markov_kernel_reversible_double_lemma}.

\begin{definition}\label{p_system_definition}
	Let \(E\) be a set. A collection \(\mathcal{C}\subset \mathcal{P}(E)\) is called a
  p-system if \(A\cap B\in \mathcal{C}\) for all \(A, B\in \mathcal{C}\).
\end{definition}

\begin{lemma}\label{measure_generator_equality_lemma}
  Let \(E\) be a set and let \(\mathcal{C}\subset \mathcal{P}(E)\) be a
  p-system. Let \(\cale\) be the \(\sigma\)-algebra generated by \(\mathcal{C}\).
  Let \(\mu\) and \(\nu\) be finite measures on \((E, \cale)\). If
  \(\mu(A) = \nu(A)\) for all \(A\in \mathcal{C}\), \(\mu = \nu\).
\end{lemma}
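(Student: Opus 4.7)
The plan is to apply the classical $\pi$-$\lambda$ (Dynkin) theorem to the collection of sets on which the two measures agree. Specifically, I would introduce
\[
  \mathcal{D} = \{A \in \cale : \mu(A) = \nu(A)\},
\]
and show that $\mathcal{D}$ is a $\lambda$-system (Dynkin system), that is, it contains $E$, is closed under proper differences, and is closed under countable increasing unions. Since $\mathcal{C} \subseteq \mathcal{D}$ by hypothesis, Dynkin's theorem then gives $\sigma(\mathcal{C}) \subseteq \mathcal{D}$, which by assumption equals $\cale$, and thus $\mu = \nu$.

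The verification of the three $\lambda$-system properties for $\mathcal{D}$ is routine. For $E \in \mathcal{D}$, one uses $\mu(E) = \nu(E)$, which either comes directly (if $E \in \mathcal{C}$, as is the implicit convention in the way p-systems are used to generate $\cale$ here) or from a covering of $E$ by a countable sequence in $\mathcal{C}$ together with the finiteness of $\mu$ and $\nu$. Closure under proper differences, i.e.\ $A, B \in \mathcal{D}$ with $A \subseteq B$ implies $B \setminus A \in \mathcal{D}$, follows from
\[
  \mu(B \setminus A) = \mu(B) - \mu(A) = \nu(B) - \nu(A) = \nu(B \setminus A),
\]
where the subtractions are well-defined precisely because $\mu$ and $\nu$ are finite. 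Closure under countable increasing unions follows from the monotone continuity of both measures: if $A_n \uparrow A$ with each $A_n \in \mathcal{D}$, then $\mu(A) = \lim_n \mu(A_n) = \lim_n \nu(A_n) = \nu(A)$.

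The only real content is the invocation of Dynkin's $\pi$-$\lambda$ theorem, which states that the $\lambda$-system generated by a $\pi$-system equals the $\sigma$-algebra it generates. With that tool in hand, the conclusion $\mathcal{D} \supseteq \sigma(\mathcal{C}) = \cale$ is immediate. The main subtlety, and the only place finiteness is actually used, is in the closure under proper differences, where subtracting measure values must be legitimate; this is exactly why the statement restricts attention to finite rather than merely $\sigma$-finite measures. I would simply cite Dynkin's theorem (for example from Çinlar) rather than reprove it, so the entire argument reduces to the two-line check above.
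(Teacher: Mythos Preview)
The paper does not actually prove this lemma; it simply cites \c{C}inlar's textbook (Proposition I.3.7). Your $\pi$--$\lambda$ argument is therefore more detailed than what appears in the paper and is exactly the standard route one finds behind that citation.

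That said, there is a genuine gap, and it originates in a missing hypothesis in the lemma as stated rather than in your strategy. To get $E \in \mathcal{D}$ you need $\mu(E) = \nu(E)$, but this does not follow from the stated assumptions: a p-system need not contain $E$, and nothing guarantees a countable cover of $E$ by sets in $\mathcal{C}$ (nor would such a cover suffice without disjointness or further closure properties). A two-point counterexample already breaks the lemma as written: take $E = \{0,1\}$, $\mathcal{C} = \{\{0\}\}$, $\mu$ the counting measure, and $\nu$ with $\nu(\{0\}) = 1$, $\nu(\{1\}) = 2$; then $\mu$ and $\nu$ agree on $\mathcal{C}$ but not on $\cale$. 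The correct statement (and \c{C}inlar's actual proposition) includes the hypothesis $\mu(E) = \nu(E)$. You correctly sensed the issue and hedged, but neither hedge closes the gap; once the missing hypothesis is added, your $\pi$--$\lambda$ argument is complete and standard. In the paper's only use of this lemma (Lemma~\ref{product_measure_equality_lemma}), the generating p-system consists of product sets and therefore does contain the whole space, so the omission is harmless in context.
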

\begin{proof}
  See \citet[Proposition I.3.7]{Cin11}.
\end{proof}

\begin{lemma}\label{product_measure_equality_lemma}
  Let \((E, \cale)\) be a measurable space and let \(\mu\) and \(\nu\)
  be measures on \((E, \cale)^{d}\) with a countable partition \(P\) of \(E\)
  such that \(\mu(\bigtimes_{j=1}^{d}B_{j}) < \infty\) and
  \(\nu(\bigtimes_{j=1}^{d}B_{j}) < \infty\) for all \(B_{1},\dotsc, B_{d}\in P\).
  If
  \[
    \mu\left(\bigtimes_{j}^{d}A_{j}\right) = \nu\left(\bigtimes_{j}^{d}A_{j}\right)
  \]
  for all \(A_{1},\dotsc,A_{d}\in \cale\), \(\mu = \nu\).
\end{lemma}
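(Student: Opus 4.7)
The statement is the standard $\pi$-$\lambda$ uniqueness of product measures, with the extra wrinkle that $\mu$ and $\nu$ need not be finite, only $\sigma$-finite along the given partition. My plan is to apply Lemma~\ref{measure_generator_equality_lemma} locally on each cell of the product partition and then patch.

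First I would observe that the collection of measurable rectangles
$\calf = \{A_1 \times \dotsb \times A_d : A_j \in \cale\}$ forms a p-system in the sense of Definition~\ref{p_system_definition}, since
$(A_1 \times \dotsb \times A_d) \cap (A_1' \times \dotsb \times A_d') = (A_1 \cap A_1') \times \dotsb \times (A_d \cap A_d')$, and that $\calf$ generates the product $\sigma$-algebra $\cale^{\otimes d}$ on $E^d$ by definition.

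Next I would use the countable partition $P$ of $E$ to build a countable partition of $E^d$ into rectangles of finite measure. Writing $P = \{B^{(k)} : k \in \N\}$, the products
$R_{k_1,\dotsc,k_d} = B^{(k_1)} \times \dotsb \times B^{(k_d)}$
form a countable partition of $E^d$, and by hypothesis $\mu(R_{k_1,\dotsc,k_d}) < \infty$ and $\nu(R_{k_1,\dotsc,k_d}) < \infty$ for every index tuple. For each such rectangle $R$, I define the finite restricted measures $\mu_R(C) = \mu(C \cap R)$ and $\nu_R(C) = \nu(C \cap R)$ on $\cale^{\otimes d}$. For any rectangle $A_1 \times \dotsb \times A_d \in \calf$, the intersection with $R = B^{(k_1)} \times \dotsb \times B^{(k_d)}$ is itself a rectangle $(A_1 \cap B^{(k_1)}) \times \dotsb \times (A_d \cap B^{(k_d)})$, so the hypothesis of the lemma gives $\mu_R(A_1 \times \dotsb \times A_d) = \nu_R(A_1 \times \dotsb \times A_d)$. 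Hence $\mu_R$ and $\nu_R$ are finite measures agreeing on the p-system $\calf$, and Lemma~\ref{measure_generator_equality_lemma} yields $\mu_R = \nu_R$ on all of $\cale^{\otimes d}$.

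Finally I would conclude by countable additivity: for any $C \in \cale^{\otimes d}$,
\[
  \mu(C) = \sum_{k_1,\dotsc,k_d} \mu(C \cap R_{k_1,\dotsc,k_d}) = \sum_{k_1,\dotsc,k_d} \nu(C \cap R_{k_1,\dotsc,k_d}) = \nu(C),
\]
which is exactly $\mu = \nu$. There is no serious obstacle here; the only point needing a little care is that Lemma~\ref{measure_generator_equality_lemma} requires \emph{finite} measures, which is why the partition hypothesis is essential and why I reduce to the restrictions $\mu_R$, $\nu_R$ before invoking the p-system uniqueness result, rather than attempting to apply it directly to $\mu$ and $\nu$.
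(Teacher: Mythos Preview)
Your proof is correct and follows essentially the same route as the paper: restrict $\mu$ and $\nu$ to each cell $R=\bigtimes_{j}B^{(k_j)}$ of the product partition, note that the intersection of a rectangle with $R$ is again a rectangle so the restrictions agree on the generating p-system, invoke Lemma~\ref{measure_generator_equality_lemma} on the finite restrictions, and then sum over the countable partition. The only differences are notational (the paper writes $\mu|C$ for your $\mu_R$ and enumerates $P_d$ as a single sequence $C_i$ rather than by multi-index).
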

\begin{proof}
  Let \(P_{d} = \{\bigtimes_{j=1}^{d}B_{j}\mid B_{1},\dotsc, B_{d}\in P\}\).
  Denote the \emph{restriction} of \(\mu\) into \(C\) by \(\mu|C\), which is
  the measure \((\mu|C)(A) = \mu(A\cap C)\)~\cite{Cin11}. The measures
  \(\mu|C\) and \(\nu|C\) for \(C \in P_{d}\) are finite as
  \((\mu|C)(A) \leq \mu(C) < \infty\) for any \(A\in \cale^{d}\) and the same
  holds for \(\nu\).

  Recall that \(\cale^{d}\) is generated by the p-system of sets of the
  form \(\bigtimes_{j=1}^{d}A_{j}\) for \(A_{1},\dotsc,A_{d}\in \cale\).
  For any \(C\in P_{d}\) and \(A_{1},\dotsc,A_{d}\in \cale\), we have
  \begin{align*}
    (\mu|C)\left(\bigtimes_{j=1}^{d}A_{j}\right)
    &= \mu\left(\left(\bigtimes_{j=1}^{d}A_{j}\right)
      \cap \left(\bigtimes_{j=1}^{d}B_{j}\right)\right)
    \\&= \mu\left(\bigtimes_{j=1}^{d}(A_{j}\cap B_{j})\right)
    \\&= \nu\left(\bigtimes_{j=1}^{d}(A_{j}\cap B_{j})\right)
    \\&= (\nu|B)\left(\bigtimes_{j=1}^{d}A_{j}\right),
  \end{align*}
  so \((\mu|C) = (\nu|C)\) for any \(C\in P_{d}\) by
  Lemma~\ref{measure_generator_equality_lemma}.

  As \(P\) is countable, the sets
  in \(P_{d}\) can be enumerated as \(C_{i}\) for \(i\in \N\). Now
  \[
    \mu(A) = \mu(E^{d}\cap A)
    = \mu\left(\bigcup_{i=1}^{\infty}(C_{i}\cap A)\right)
    = \sum_{i=1}^{\infty}(\mu|C_{i})(A)
    = \sum_{i=1}^{\infty}(\nu|C_{i})(A)
    = \nu(A)
  \]
  for any \(A\in \cale^{d}\), so \(\mu = \nu\).
\end{proof}

\begin{lemma}\label{markov_kernel_reversible_double_lemma}
	Let \((E, \cale)\) be a measurable space, let \(q\) be a Markov kernel on
  \((E, \cale)^{2}\) and let \(\mu\) be a \(\sigma\)-finite measure on
  \((E, \cale)\). Then \(q\) is reversible with respect to \(\mu^{2}\) if
  and only if
  \[
    \int_{A}\mu(\dx a)\int_{B}\mu(\dx b)\int_{C\times D}q((a, b), \dx(c, d))
    = \int_{C}\mu(\dx c)\int_{D}\mu(\dx d)\int_{A\times B}q((c, d), \dx(a, b))
  \]
  for all \(A, B, C, D\in E\).
\end{lemma}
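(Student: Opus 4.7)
The plan is to express reversibility as an equality of two measures on $(E, \cale)^{4}$, show that the stated condition pins down these measures on a generating $\pi$-system of rectangles, and then lift the equality to all of $\cale^{4}$ using Lemma~\ref{product_measure_equality_lemma}.

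For the forward direction, assume $q$ is reversible with respect to $\mu^{2}$. Taking $S = A\times B$ and $T = C\times D$ in Definition~\ref{markov_kernel_reversible_definition} and applying Fubini to unfold $\mu^{2}(\dx(a,b)) = \mu(\dx a)\mu(\dx b)$ on each side yields the stated equality directly.

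For the reverse direction, I would apply Lemma~\ref{markov_kernel_product_unique_lemma} with $(E, \cale)^{2}$ playing the role of the base space, $\mu^{2}$ the role of the measure, and $q$ the kernel. This produces a unique $\sigma$-finite measure $\nu_{1}$ on $((E, \cale)^{2})^{2}$, identified with $(E, \cale)^{4}$, satisfying
\[
  \nu_{1}(S\times T) = \int_{S}\mu^{2}(\dx(a,b))\int_{T}q((a,b), \dx(c,d)).
\]
Swapping the roles of the integrations gives a second measure $\nu_{2}$ with $\nu_{2}(S\times T) = \int_{T}\mu^{2}(\dx(c,d))\int_{S}q((c,d), \dx(a,b))$. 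Reversibility of $q$ with respect to $\mu^{2}$ is by definition the assertion $\nu_{1} = \nu_{2}$ on $\cale^{4}$, and the hypothesis of the lemma is precisely that $\nu_{1}$ and $\nu_{2}$ agree on four-fold rectangles $A\times B\times C\times D$ with $A, B, C, D\in \cale$.

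The main obstacle is checking the hypotheses of Lemma~\ref{product_measure_equality_lemma} to upgrade agreement on rectangles to agreement as measures. Since $\mu$ is $\sigma$-finite, fix a countable partition $P$ of $E$ with $\mu(P_{i}) < \infty$ for every $P_{i}\in P$. For any $B_{1}, B_{2}, B_{3}, B_{4}\in P$, using that $q((a,b), \cdot)$ is a probability measure,
\[
  \nu_{1}(B_{1}\times B_{2}\times B_{3}\times B_{4})
  \leq \int_{B_{1}}\mu(\dx a)\int_{B_{2}}\mu(\dx b)
  = \mu(B_{1})\mu(B_{2}) < \infty,
\]
and symmetrically $\nu_{2}(B_{1}\times B_{2}\times B_{3}\times B_{4}) \leq \mu(B_{3})\mu(B_{4}) < \infty$. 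Thus Lemma~\ref{product_measure_equality_lemma} applies with $d = 4$, yielding $\nu_{1} = \nu_{2}$ on $\cale^{4}$, which is reversibility of $q$ with respect to $\mu^{2}$.
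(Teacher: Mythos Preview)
Your proposal is correct and follows essentially the same approach as the paper: define the two measures $\nu_{1}$ and $\nu_{2}$ on $(E,\cale)^{4}$, observe that reversibility is $\nu_{1}=\nu_{2}$ while the hypothesis is agreement on four-fold rectangles, and then invoke Lemma~\ref{product_measure_equality_lemma} with $d=4$ using a $\sigma$-finite partition of $E$ and the fact that $q((a,b),\cdot)$ is a probability measure to bound the blocks. The only cosmetic differences are that you cite Lemma~\ref{markov_kernel_product_unique_lemma} and Fubini explicitly where the paper leaves these implicit.
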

\begin{proof}
  Let \(V, W\in \cale^{2}\) and
  \[
    \nu_{1}(V\times W) = \int_{V}\mu^{2}(\dx v)\int_{W}q(v, \dx w),
  \]
  \[
    \nu_{2}(V\times W) = \int_{W}\mu^{2}(\dx w)\int_{V}q(w, \dx v).
  \]
  Now reversibility of \(q\) with respect to \(\mu^{2}\) is equivalent
  to \(\nu_{1} = \nu_{2}\).

  If \(\nu_{1} = \nu_{2}\), for all \(A, B, C, D\in \cale\),
  \begin{align*}
    \int_{A}\mu(\dx a)\int_{B}\mu(\dx b)\int_{C\times D}q((a, b), \dx(c, d))
    &= \nu_{1}(A\times B\times C\times D)
    \\&= \nu_{2}(A\times B\times C\times D)
    \\&= \int_{C}\mu(\dx c)\int_{D}\mu(\dx d)\int_{A\times B}q((c, d), \dx(a, b)).
  \end{align*}

  If
  \[
    \int_{A}\mu(\dx a)\int_{B}\mu(\dx b)\int_{C\times D}q((a, b), \dx(c, d))
    = \int_{C}\mu(\dx c)\int_{D}\mu(\dx d)\int_{A\times B}q((c, d), \dx(a, b)),
  \]
  then
  \[
    \nu_{1}(A\times B\times C\times D) = \nu_{2}(A\times B\times C\times D)
  \]
  for all \(A, B, C, D\in \cale\).
  As \(\mu\) is \(\sigma\)-finite, there is a countable partition \(E_{i}\) of \(E\)
  such that \(\mu(E_{i})< \infty\) for all \(i\in \N\). Additionally,
  \[
    \nu_{1}(E_{i}\times E_{j}\times E_{k}\times E_{l})
    \leq \int_{E_{i}}\mu(\dx a)\int_{E_{j}}\mu(\dx b)
    < \infty
  \]
  and
  \[
    \nu_{2}(E_{i}\times E_{j}\times E_{k}\times E_{l})
    \leq \int_{E_{k}}\mu(\dx c)\int_{E_{l}}\mu(\dx d)
    < \infty.
  \]
  so \(\nu_{1} = \nu_{2}\) by Lemma~\ref{product_measure_equality_lemma}.
\end{proof}

\dphmcleapfrogstepreversibility*
\begin{proof}
  Starting with \(l_{\theta}^{-} = l_{\theta}\circ l_{-}\), note that \(l_{\theta}\circ l_{-}\)
  is an involution that preserves Lebesgue measure. The Markov kernel for
  \(l_{\theta}\circ l_{-}\) is \(\delta_{(l_{\theta}\circ l_{-}})(\theta, p)\),
  so the claim follows from Lemma~\ref{dirac_reversible_lemma}.

  Recall that both \(l_{p_{\nicefrac{\eta}{2}}}^{-}\) and \(l_{p_{\eta}}^{-}\) are of the form
  \[
    (l_{-}\circ l_{p_{s}})(\theta, p) = (\theta, -p + s(g(\theta) + \xi)),
  \]
  where \(\xi\sim \caln(0, \sigma_{g}^{2})\) and \(s > 0\).
  Definition~\ref{markov_kernel_reversible_definition} for
  \(l_{-}\circ l_{p_{s}}\) is
  \[
    \int_{V}m_{2d}(\dx v)\int_{W}(l_{-}\circ l_{p_{s}})(v, \dx w)
    = \int_{W}m_{2d}(\dx w)\int_{V}(l_{-}\circ l_{p_{s}})(w, \dx v)
  \]
  for all measurable \(V, W\in \R^{2d}\). Because of Lemma~\ref{markov_kernel_reversible_double_lemma},
  this can be stated as
  \begin{align*}
    &\int_{A}m_{d}(\dx \theta)\int_{B}m(\dx p)\int_{C\times D}
    (l_{-}\circ l_{p_{s}})(\theta, p, \dx(\theta', p'))
    \\= &\int_{C}m_{d}(\dx \theta')\int_{D}m(\dx p')\int_{A\times B}
    (l_{-}\circ l_{p_{s}})(\theta', p', \dx(\theta, p))
  \end{align*}
  for all measurable \(A, B, C, D\subset \R^{d}\). Denote \(f(\theta) = sg(\theta)\),
  \(\sigma^{2} = s^{2}\sigma_{g}^{2}\) and the density function of the
  \(d\)-dimensional Gaussian distribution by \(\caln(\cdot\mid \mu, \Sigma)\).
  Now, for any measurable \(C, D\subset \R^{d}\)
  \begin{align}
    \int_{C\times D}(l_{-}\circ l_{p_{s}})(\theta, p, \dx(\theta', p'))
    &= 1_{C}(\theta)\int_{D}m_{d}(\dx p')\caln_{d}(p'\mid -p + f(\theta), \sigma^{2}I)
      \label{l_p_equality}
    \\&= \int_{C}\delta_{\theta}(\dx \theta')
    \int_{D}m_{d}(\dx p')\caln_{d}(p'\mid -p + f(\theta), \sigma^{2}I),
  \end{align}
  which leads into
  \begin{align}
    &\int_{A}m_{d}(\dx \theta)\int_{B}m_{d}(\dx p)\int_{C\times D}
    (l_{-}\circ l_{p_{s}})(\theta, p, \dx(\theta', p'))
    \\&= \int_{A}m_{d}(\dx \theta)\int_{B}m_{d}(\dx p)
    \int_{C}\delta_{\theta}(\dx \theta')\int_{D}m_{d}(\dx p')\caln_{d}(p'\mid -p + f(\theta), \sigma^{2}I)
    \\&= \int_{A}m_{d}(\dx \theta)\int_{C}\delta_{\theta}(\dx \theta')\int_{B}m_{d}(\dx p)
    \int_{D}m_{d}(\dx p')\caln_{d}(p'\mid -p + f(\theta), \sigma^{2}I)
    \\&= \int_{C}m_{d}(\dx \theta')\int_{A}\delta_{\theta'}(\dx \theta)\int_{B}m_{d}(\dx p)
    \int_{D}m_{d}(\dx p')\caln_{d}(p'\mid -p + f(\theta), \sigma^{2}I)
    \label{dirac_reversibility_step}
    \\&= \int_{C}m_{d}(\dx \theta')\int_{A}\delta_{\theta'}(\dx \theta)\int_{B}m_{d}(\dx p)
    \int_{D}m_{d}(\dx p')\caln_{d}(p\mid -p' + f(\theta), \sigma^{2}I)
    \\&= \int_{C}m_{d}(\dx \theta')1_{A}(\theta')\int_{B}m_{d}(\dx p)
    \int_{D}m_{d}(\dx p')\caln_{d}(p\mid -p' + f(\theta'), \sigma^{2}I)
    \label{dirac_evaluation_step}
    \\&= \int_{C}m_{d}(\dx \theta')\int_{D}m_{d}(\dx p')1_{A}(\theta')
    \int_{B}m_{d}(\dx p)\caln_{d}(p\mid -p' + f(\theta'), \sigma^{2}I)
    \\&= \int_{C}m_{d}(\dx \theta')\int_{D}m(\dx p')\int_{A\times B}
    (l_{-}\circ l_{p_{s}})(\theta', p', \dx(\theta, p)),
    \label{l_p_equality_step}
  \end{align}
  where (\ref{dirac_reversibility_step}) uses Lemma~\ref{markov_kernel_reversible_integral_corollary},
  (\ref{dirac_evaluation_step}) uses the property of the Dirac measure that
  \(\int_{A}\delta_{b}(\dx a)f(a) = 1_{A}(b)f(b)\) for \(f\colon \R^{d}\to \R^{d}\)
  and (\ref{l_p_equality_step}) uses Equation~(\ref{l_p_equality}).
\end{proof}

\end{document}